\newcommand{\gv}[1]{\mathbf{#1}}  % graph variable, i.e. G-indexed variable
\newcommand{\subs}{\subseteq}
\newcommand{\cut}{\cap}
\newcommand{\uni}{\cup}
\newcommand{\Or}{\vee}
\renewcommand{\And}{\wedge}
\newcommand{\Sp}[2]{\{{#1}\ |\ {#2}\}} % sets  by property
\newcommand{\gs}[1]{|{#1}|}            % size of a graph
\newcommand{\Ss}[1]{\|{#1}\|}                   % sum of a set
\newcommand{\sd}{\bigtriangleup}          % symmetric difference of sets
\newcommand{\eps}{\varepsilon}
\renewcommand{\phi}{\varphi}
\newcommand{\prc}[1]{\ensuremath{\mathsf{#1}}}
\newcommand{\SP}{\prc{\#P}}
\newcommand{\NP}{\prc{NP}}
\newcommand{\poly}{\mathsf{poly}}
\newcommand{\ETH}{\#ETH}
\newcommand{\DRSAT}{\#3-SAT}
\newcommand{\XDRSAT}{\#X3SAT}
\newcommand{\nIS}{\#$\frac 1 3$-IS}
\newcommand{\ProbName}{\item[Name:]}
\newcommand{\ProbInput}{\item[Input:]}
\newcommand{\ProbOutput}{\item[Output:]}
\newenvironment{problem}
{\begin{list}{*}{\setlength\itemsep{0cm}
\setlength\parsep{0cm}}}
{\end{list}}
\newcommand{\gpe}[2]{{#1}(-;{#2})}       % graph polynomial evaluation
\newcommand{\Q}{\mathbb{Q}}
\newcommand{\R}{\mathbb{R}}
\renewcommand{\S}{\tilde S}
\newcommand{\T}{\tilde T}
\newcommand{\Sot}{S_{12}}
\newcommand{\Sz}{\tilde S_0}
\newcommand{\lo}{{\lambda_1}}
\newcommand{\lt}{{\lambda_2}}
\newcommand{\vdel}[2]{{#1}-{#2}}   % vertex deletion
\newtheorem{defi}{Definition}[section]
\newtheorem{cor}[defi]{Corollary}
\newtheorem{lem}[defi]{Lemma}
\newtheorem{thm}[defi]{Theorem}
\newtheorem{rem}[defi]{Remark}
\newtheorem*{expth}{\ETH}
\newcommand{\defexpr}[1]{{\em{#1}}}     % Definition eines Begriffes
\newcommand{\tr}[2][]{\tau_{{#1}} {#2}}     % general (graph) transformation
\newcommand{\tc}[1][]{\kappa^{{#1}}}          % graph transformation ``clone''
\begin{document}

\title[Exponential Time Complexity of Weighted Counting Independent
  Sets]{Exponential Time Complexity of \\ Weighted Counting of
  Independent Sets}

\author{Christian Hoffmann}

\email{christian.hoffmann2010@googlemail.com}
\thanks{Full version of a contribution to IPEC 2010. This work has been done while the author was a research assistant at Saarland University, Germany.}

\begin{abstract}
  We consider weighted counting of independent sets using a rational
  weight $x$: Given a graph with $n$ vertices, count its independent
  sets such that each set of size $k$ contributes $x^k$. This is
  equivalent to computation of the partition function of the lattice
  gas with hard-core self-repulsion and hard-core pair interaction. We
  show the following conditional lower bounds: If counting the
  satisfying assignments of a $3$-CNF formula in $n$ variables
  (\#3SAT) needs time $2^{\Omega(n)}$ (i.e.\ there is a $c>0$ such
  that no algorithm can solve \#3SAT in time $2^{cn}$), counting the
  independent sets of size $n/3$ of an $n$-vertex graph needs time
  $2^{\Omega(n)}$ and weighted counting of independent sets needs time
  $2^{\Omega(n/\log^3 n)}$ for all rational weights $x\neq 0$.

  We have two technical ingredients: The first is a reduction from
  3SAT to independent sets that preserves the number of solutions and
  increases the instance size only by a constant factor. Second, we
  devise a combination of vertex cloning and path addition. This graph
  transformation allows us to adapt a recent technique by Dell,
  Husfeldt, and Wahlén which enables interpolation by a family of
  reductions, each of which increases the instance size only
  polylogarithmically.
\end{abstract}

\maketitle

\section{Introduction}

Finding independent sets with respect to certain restrictions is a
fundamental problem in theoretical computer science. Perhaps the most
studied version is the maximum independent set problem: Given a graph,
find an independent set\footnote{A subset $A$ of the vertices of a
  graph is \defexpr{independent} iff no two vertices in $A$ are joined
  by an edge of the graph.} of maximum size. This problem is closely
related to the clique and vertex cover problems. The decision versions
of these are among the 21 problems considered by Karp in 1972
\cite{karp}, and they are used as examples in virtually every
exposition of the theory of $\NP$-completeness \cite[Section
  3.1]{garey_johnson}, \cite[Section 9.3]{papadimitriou},
\cite[Section 34.5]{cormen}. Exact algorithms for the independent set
problem have been studied since the 70s of the last century
\cite{DBLP:journals/siamcomp/TarjanT77, DBLP:journals/tc/Jian86,
  DBLP:journals/jal/Robson86} and there is still active research
\cite{DBLP:journals/jacm/FominGK09, kneis_et_al:LIPIcs:2009:2326,
  DBLP:conf/swat/BourgeoisEPR10}.

Besides finding a maximum independent set, algorithms that count the
number of independent sets have also been developed
\cite{DBLP:conf/soda/DahllofJ02}. If the counting process is done in a
weighted manner (as in \eqref{eq:indset_poly} below), we arrive at a
problem from statistical physics: computation of the partition
function of the lattice gas with hard-core self-repulsion and
hard-core pair interaction \cite{scott_sokal}. In graph theoretic
language, this is the following problem: Given a graph $G=(V,E)$ and a
weight $x\in \Q$, compute
\begin{equation}
  \label{eq:indset_poly}
  I(G;x)=\sum_{\substack{A\subs V\\ \text{$A$ independent set}}} x^{|A|}.
\end{equation}
$I(G;x)$ is also known as the independent set polynomial of $G$
\cite{HoedeLi, GutmanHarary}. Luby and Vigoda mention that
``equivalent models arise in the Operations Research community when
considering properties of stochastic loss systems which model
communication networks'' \cite{luby_vigoda}. Evaluation of $I(G;x)$
has received a considerable amount of attention, mainly concerning
approximability if $x$ belongs to a certain range depending on
$\Delta$, the maximum degree of $G$ \cite{luby_vigoda, dyer00markov,
  DBLP:journals/combinatorics/Vigoda01, DBLP:conf/stoc/Weitz06}.

In this paper, we give evidence that exact evaluation of $I(G;x)$
needs almost exponential time (Theorem~\ref{thm:WISexponential}). We
do this by reductions from the following problem:

\begin{problem}
\ProbName \#$d$-SAT
\ProbInput Boolean formula $\phi$ in $d$-CNF with $m$ clauses in $n$ variables
\ProbOutput Number of satisfying assignments for $\phi$
\end{problem}

All lower bounds of this work are based on the following assumption,
which is a counting version of the exponential time hypothesis (ETH)
\cite{impagliazzo_paturi_zane, dell_husfeldt_wahlen}:

\begin{expth}[Dell, Husfeldt, Wahlén 2010]
\label{conj:eth}
There is a constant $c > 0$ such that no deterministic algorithm can
compute \DRSAT\ in time $\exp(c \cdot n)$.
\end{expth}

Our first result concerns the following problem:
\begin{problem}
\ProbName \nIS
\ProbInput Graph with $n$ vertices
\ProbOutput Number of independent sets of size exactly $n/3$ in $G$
\end{problem}

\begin{thm}
  \label{thm:IS}
  \nIS\ requires time $\exp(\Omega(n))$ unless \ETH\ fails.
\end{thm}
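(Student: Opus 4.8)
The plan is to reduce $\DRSAT$ to $\nIS$ in a way that preserves the number of solutions up to an easily computable factor and blows up the instance size only by a constant, so that a $2^{o(n)}$ algorithm for $\nIS$ would yield a $2^{o(n)}$ algorithm for $\DRSAT$, contradicting $\ETH$. The starting point is the standard parsimonious reduction from $3$-SAT to independent set: given a $3$-CNF formula $\phi$ with $n$ variables and $m$ clauses, build a graph with one triangle (or a clique on the literals present) per clause, add an edge between every pair of complementary literals occurring in different clauses, and ask for an independent set of size $m$. Picking one vertex per clause gadget corresponds exactly to choosing a satisfied literal in each clause, and the conflict edges guarantee consistency, so independent sets of size $m$ biject with satisfying assignments. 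Crucially $m = O(n)$ for $3$-CNF (we may assume no clause is repeated, or pass through the sparsification lemma of Impagliazzo--Paturi--Zane to get $m = O(n)$), so the graph has $O(n)$ vertices.

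The remaining issue is that $\nIS$ asks for independent sets of size \emph{exactly one third} of the vertex count, not of some prescribed size $m$. So the next step is a padding argument: starting from the graph $H$ on $N = O(n)$ vertices where we want independent sets of size exactly $m$, I would add a controlled number of isolated vertices and/or a disjoint clique so that (i) the target size $m$ becomes exactly one third of the new vertex total, and (ii) the count of independent sets of the new target size equals the count of size-$m$ independent sets in $H$ times an explicit binomial-type factor that we can divide out. Concretely, adding $t$ isolated vertices multiplies the number of size-$(m+j)$ independent sets, summed appropriately, by binomial coefficients; by instead attaching a disjoint clique $K_r$ (which contributes an independent set of size $0$ or $1$) together with $s$ isolated vertices, one gets finer control: the number of independent sets of size exactly $k$ in $H \mathbin{\dot{\cup}} K_r \mathbin{\dot{\cup}} (\text{$s$ isolated vertices})$ is $\sum_{i=0}^{1}\binom{r}{i}\sum_j \binom{s}{k-i-j} a_j$ where $a_j$ is the number of size-$j$ independent sets of $H$. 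Choosing parameters so that only the $j=m$ term survives in the range of interest, or so that the whole vector $(a_j)$ can be recovered, lets us read off $a_m$. Since we only need $a_m$ and $N = O(n)$, it suffices to take $r, s = O(n)$, keeping the final vertex count $3(m) = O(n)$ and the blow-up linear.

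The main obstacle is arranging the padding so that the single quantity $a_m$ is extractable from one (or a constant number of) $\nIS$ queries while simultaneously forcing the target size to be exactly $n'/3$ of the padded graph; these two demands pull against each other, since making $a_m$ isolated in the count tends to require many isolated vertices, which shifts the ``one third'' point. I expect the cleanest route is to allow a constant number of calls to $\nIS$ on graphs of size $O(n)$ with slightly different padding and then solve a small linear system (a Vandermonde-type inversion) for the entries $a_j$, $0 \le j \le N$; a constant number of $2^{o(n)}$ computations is still $2^{o(n)}$. Once $a_m$ is known, it equals the number of satisfying assignments of $\phi$ (after dividing out any clause-gadget overcount, which is $1$ in the parsimonious construction), completing the reduction and hence the proof of Theorem~\ref{thm:IS}.
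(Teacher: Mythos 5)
Your reduction's central claim---that the standard triangle-per-clause construction yields a bijection between size-$m$ independent sets and satisfying assignments of $\phi$---is false, and not by a fixed factor you could divide out. A size-$m$ independent set picks one literal per clause with no conflicts, i.e.\ a consistent \emph{partial} assignment making each clause true. A satisfying assignment that satisfies two or three literals in some clause corresponds to several size-$m$ independent sets (the overcount depends on the assignment, not on the formula), and a size-$m$ independent set that leaves some variable unconstrained corresponds to several satisfying assignments. For instance $\phi = (x \lor y \lor z)$ has $7$ satisfying assignments but only $3$ size-$1$ independent sets in the gadget graph. The standard construction is parsimonious for decision, not for counting, so your reduction does not transfer counts as stated.

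The paper repairs exactly this by first reducing \DRSAT\ to \XDRSAT\ (Lemma~\ref{lem:sat2xsat}, via Schaefer's gadget), i.e.\ to counting assignments that satisfy \emph{exactly one} literal per clause. In that setting picking the unique satisfied literal per clause is unambiguous, and the edge gadgets of Lemma~\ref{lem:ISexplowerbound}---which force the occurrences of a repeated or complementary literal across different triangles to be chosen consistently---give a genuine bijection, assuming every variable occurs. This also dissolves your second difficulty for free: the resulting graph has exactly $3m$ vertices and the target size is $m = (3m)/3$, so no padding, disjoint cliques, or Vandermonde-style inversion are needed; the one-third constraint falls out automatically. The tension you flag between isolating the coefficient $a_m$ in the count and simultaneously hitting the $n'/3$ target never arises once the parsimony is secured at the SAT level rather than patched in afterwards.
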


Theorem~\ref{thm:IS} gives an important insight for the development of
exact algorithms counting independent sets: Let us consider algorithms
that count independent sets of a particular kind. (For example:
algorithms that count the independent sets of maximum size. Another
example: algorithms that simply count all independent sets). Using
only slight modifications, many of the actual algorithms that have
been suggested for these problems can be turned into algorithms that
solve \nIS. Theorem~\ref{thm:IS} tells us that there is some $c>1$
such that every such algorithm has worst-case running time $\geq
c^n$---unless \ETH\ fails. In other words: There is a universal $c^n$
barrier for counting independent sets that can only be broken 1) if
substantial progress on counting SAT is made or 2) by approaches that
are custom-tailored to the actual version of the independent set
problem such that they can not be used to solve \nIS.

The proof of Theorem~\ref{thm:IS} is \emph{different} from the
standard constructions that reduce the decision version of 3SAT to the
decision version of maximum independent set \cite{karp}, \cite[Theorem
  9.4]{papadimitriou}. This is due to the fact that these
constructions do not preserve the number of solutions. Furthermore,
the arguments for counting problems that have been applied in
\SP-hardness proofs also fail in our context, as they increase the
instance size by more than a constant factor\footnote{For instance,
  Valiant's step from perfect matchings to prime implicants
  \cite{valiant_enumeration} includes transforming a $\Theta(n)$
  vertex graph into a $\Theta(n^2)$ vertex graph.} and thus do not
preserve subexponential time.

Theorem~\ref{thm:IS} is proved in Section~\ref{sec:sat2IS} using a,
with hindsight, simple reduction from \DRSAT. But for the reasons
given in the last paragraph, it is important to work this out
precisely. In this way, we close a non-trivial gap to a result that is
very important as it concerns a fundamental problem.

The main result of our paper is based on Theorem~\ref{thm:IS}:
\begin{thm}
\label{thm:WISexponential}
  Let $x\in \Q$, $x\neq 0$. On input $G=(V,E)$, $n=|V|$, evaluating
  the independent set polynomial at $x$, i.e.\ computing
  \[ \sum_{\substack{A\subs V\\\text{$A$ independent set}}} x^{|A|}, \]
  requires time $\exp(\Omega(n/\log^3n))$ unless \ETH\ fails.
\end{thm}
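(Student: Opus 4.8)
The plan is to reduce from \nIS\ via polynomial interpolation, using the graph transformations advertised in the abstract (vertex cloning combined with path addition) to realize the Dell--Husfeldt--Wahlén idea of interpolating with a family of reductions each of which blows up the instance only polylogarithmically. Fix a target graph $G$ with $n$ vertices; we want the number of independent sets of size exactly $n/3$, which by Theorem~\ref{thm:IS} cannot be extracted in time $\exp(o(n))$ unless \ETH\ fails. The coefficient sequence of $I(G;x)=\sum_k i_k x^k$ encodes exactly this number ($i_{n/3}$), so it suffices to recover enough evaluations $I(G_j;x)$ of transformed graphs $G_j$ to solve a linear system for the $i_k$. The subtlety, and the whole point of the polylog bookkeeping, is that naive interpolation needs $\Theta(n)$ distinct evaluation points or a graph substitution that multiplies vertex count by $\Theta(n)$, which would only give a $\exp(\Omega(\sqrt n))$-type bound or worse; the cloning-plus-path gadget must instead let us reach the required $\mathrm{poly}(n)$ many independent "scales" while adding only $O(\log^{O(1)} n)$ vertices at each scale, so that an $\exp(o(N/\log^3 N))$ algorithm on the transformed instance of size $N$ would beat the \nIS\ bound on $G$.

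The key steps, in order: (1) Describe the gadget. Cloning a vertex $v$ into $t$ twins (pairwise non-adjacent or pairwise adjacent, with the same external neighborhood) changes the independent-set generating function in a controlled, $x$-dependent way — each "slot" formerly weighing $x$ now weighs something like $(1+x)^t-1$ or $1+tx$ — and attaching a path of length $\ell$ to a vertex multiplies the local contribution by a transfer-matrix factor whose entries are polynomials in $x$ of degree $O(\ell)$. Combining the two lets us, for suitable integer parameters, effectively substitute $x\mapsto$ (a rational function of $x$ and the parameters) while keeping track of an overall monomial/constant prefactor. (2) Show that by choosing $\mathrm{poly}(n)$ many parameter tuples, each of bit-size and hence gadget-size $O(\log^{c} n)$, we obtain $\mathrm{poly}(n)$ graphs $G_j$ on $N_j = n\cdot O(\log^{c} n)$ vertices such that the vector $\bigl(I(G_j;x)\bigr)_j$ is obtained from $(i_k)_k$ by a known, invertible (Vandermonde-type) matrix. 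Here one must verify the evaluation points produced by the gadget are pairwise distinct and the system is nonsingular for the fixed rational $x\neq 0$; this is where one uses that $x\neq 0$ (and handles any finitely many bad parameter choices by perturbing). (3) Invert the system in time $\mathrm{poly}(n)$ to read off $i_{n/3}$. (4) Chain the running times: if $I(-;x)$ on $m$-vertex graphs were computable in $\exp(o(m/\log^3 m))$, then running it on all $G_j$ (each of size $N_j=\tilde O(n)$) plus the polynomial overhead would compute $i_{n/3}$ in $\exp(o(n/\log^2 n)) = \exp(o(n))$, contradicting Theorem~\ref{thm:IS}; tracking the $\log^3$ rather than $\log^2$ accounts for one extra logarithmic factor lost in the number of reductions versus their individual sizes.

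The main obstacle I expect is step (1)–(2): engineering a single graph operation that is simultaneously (a) expressible as a clean substitution on $I(G;x)$ with an explicit, easily-inverted prefactor, (b) capable of hitting $\mathrm{poly}(n)$ genuinely different evaluation points — not just $O(\log n)$ of them, which cloning alone would give since $t$ must stay polylogarithmic — and (c) size-efficient, adding only polylog vertices per reduction. The path addition is exactly what rescues point (b): varying the path length $\ell$ up to $\mathrm{poly}(n)$ costs only $\mathrm{poly}(n)$ extra vertices in \emph{that one} graph but, crucially, Dell--Husfeldt--Wahlén's trick is to instead use \emph{many} graphs each with a \emph{short} ($O(\log n)$) path but a cloning multiplicity read off from the binary expansion of the target exponent, so that each individual reduction is polylog-sized. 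Getting the arithmetic of "which (clone count, path length) pairs simulate which power of the base evaluation point" to line up, and proving the resulting matrix is invertible over $\Q$ for every fixed $x\neq 0$, is the technical heart; the running-time chaining in steps (3)–(4) is then routine bookkeeping.
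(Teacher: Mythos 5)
Your plan is essentially the paper's plan for the case $x>-\frac14$, but you have a genuine gap: you treat $x\neq 0$ as the only bad value to exclude, whereas the $S$-clone/path-interpolation machinery only works directly for $x$ that is \emph{nondegenerate for path reduction}, namely $x>-\frac14$ and $x\neq 0$. The reason is structural, not a removable technicality: the path gadget's transfer matrix has eigenvalues $\lambda_{1,2}=\frac12\pm\sqrt{\frac14+x}$, and the separation argument that makes the $\mathrm{poly}(n)$ evaluation points $x(S_i)$ pairwise distinct (Lemma~\ref{lem:niceSi}) hinges on $|\lo/\lt|^k\to\infty$. For $x<-\frac14$ the two eigenvalues are complex conjugates of \emph{equal} modulus, so this ratio is identically $1$ and the separation proof collapses; at $x=-\frac14$ they coincide. ``Perturbing finitely many bad parameter choices'' does not rescue this: the degeneracy is in $x$ itself, which is fixed by hypothesis. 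The paper handles $x\leq -\frac14$ (and $x\neq 0$) by preprocessing reductions that push the evaluation point into the good region before applying the interpolation machinery: a $2$-clone sends $x<-2$ to $(1+x)^2-1>0$; a comb gadget sends $x\in(-2,0)\setminus\{-1\}$ to some $y<-2$; and cycle additions handle $x\in\{-2,-1\}$. Without some such case analysis your argument proves the theorem only for $x>-\frac14$.

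A secondary point: the invertibility step you describe as a ``Vandermonde-type'' system is handled in the paper not by generic nonsingularity/perturbation but by an explicit combinatorial analysis (Lemmas~\ref{lem:niceSi} and~\ref{lem:Xdiff}) showing $x(S_i)\neq x(S_j)$ for $i\neq j$; once the evaluation points are distinct, ordinary univariate interpolation of the degree-$\leq n$ polynomial $I(G;X)$ suffices, with no further linear-algebra subtleties. Your running-time bookkeeping gestures correctly at why the $\log^3$ appears (clone multiplicity $O(\log n)$ times path lengths up to $O(\log^2 n)$ gives per-vertex blowup $O(\log^3 n)$, which is exactly canceled by the $\log^3$ in the assumed running time), though the intermediate $\exp(o(n/\log^2 n))$ you write is not how the cancellation actually goes; the blowup and the denominator cancel in one step to give $2^{o(n)}$.
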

This shows that we can not expect that the partition function of the
lattice gas with hard-core self-repulsion and hard-core pair
interaction can be computed much faster than in exponential time.

Let us state an important consequence of
Theorem~\ref{thm:WISexponential}, the case $x=1$.
\begin{cor}
  \label{cor:IS}
  Every algorithm that, given a graph $G$ with $n$ vertices, counts
  the independent sets of $G$ has worst-case running time
  $\exp(\Omega(n/\log^3 n))$ unless \ETH\ fails.
\end{cor}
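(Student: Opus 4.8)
The plan is to read off Corollary~\ref{cor:IS} directly from Theorem~\ref{thm:WISexponential} as the instance $x = 1$. Indeed, for every graph $G = (V,E)$,
\[
  I(G;1) = \sum_{\substack{A \subs V\\ \text{$A$ independent set}}} 1^{|A|} = \#\{A \subs V : A \text{ is independent in } G\},
\]
so an algorithm that counts the independent sets of an $n$-vertex graph is literally an algorithm that evaluates the independent set polynomial at $x = 1$. Since $1 \in \Q$ and $1 \neq 0$, Theorem~\ref{thm:WISexponential} applies verbatim and delivers the claimed $\exp(\Omega(n/\log^3 n))$ lower bound under \ETH. So at the level of the corollary there is nothing to do beyond this identity and the remark that $x = 1$ is an admissible weight; in particular there is no obstacle here.

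The content, of course, sits in Theorem~\ref{thm:WISexponential}, and here is how I would attack it. Writing $I(G;x) = \sum_{k=0}^{n} i_k(G)\, x^k$, where $i_k(G)$ is the number of independent sets of size $k$, the plan is to recover the coefficients $i_k(G)$ --- in particular $i_{n/3}(G)$ --- and then invoke Theorem~\ref{thm:IS}, which (under \ETH) forces computing $i_{n/3}(G)$ to require time $\exp(\Omega(n))$. An oracle for a single fixed rational $x \neq 0$ gives only one value, so I cannot interpolate $I(G;\cdot)$ directly; instead I would manufacture ``virtual'' evaluations by graph surgery. Using a combination of vertex cloning and path addition, one builds graphs $G_1, G_2, \dots$ from $G$ such that each value $I(G_j;x)$ --- all taken at the \emph{same} fixed $x$ --- equals a known linear combination $\sum_k c_{jk}\, i_k(G)$ of the coefficients of interest. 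Choosing the transformations so that the matrix $(c_{jk})$ is non-singular (a Vandermonde-type argument), one solves the resulting linear system and reads off $i_{n/3}(G)$.

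The main obstacle --- and the reason for the $\log^3 n$ loss rather than a clean $\exp(\Omega(n))$ --- is quantitative control of the gadgets: the cloning and path-addition steps must together realize enough linearly independent transformations while each $G_j$ has only polylogarithmically more vertices than $G$, so that a hypothetical $\exp(o(N/\log^3 N))$-time algorithm for the evaluation problem on $N$-vertex graphs would yield a subexponential-time algorithm for \nIS, contradicting Theorem~\ref{thm:IS}. One must also check that the linear system stays invertible for \emph{every} rational $x \neq 0$, including innocuous-looking choices such as $x = 1$; carrying out this adaptation of the interpolation machinery of Dell, Husfeldt, and Wahlén to the independent set polynomial is precisely the work done in the body of the paper. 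Once it is in place, Corollary~\ref{cor:IS} follows from the displayed identity.
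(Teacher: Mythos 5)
Your proposal is correct and matches the paper's own treatment exactly: Corollary~\ref{cor:IS} is just Theorem~\ref{thm:WISexponential} specialized at $x=1$, since $I(G;1)$ counts independent sets, $1\in\Q$, and $1\neq 0$. The extra sketch of how Theorem~\ref{thm:WISexponential} itself is proved (via cloning, path addition, and interpolation) is a fair high-level summary of the paper's $S$-clone machinery but is not needed to establish the corollary.
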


Referring to the discussion after Theorem~\ref{thm:IS}, this gives a
conditional lower bound for our second example (i.e.\ counting all
independent sets of a graph). The bound of Corollary~\ref{cor:IS} is
not as strong as the bound of Theorem~\ref{thm:IS} but holds for
\emph{every} algorithm, not only for algorithms that can be modified
to solve \nIS.

\subsection*{Techniques and Relation to Previous Work}

Theorem~\ref{thm:IS} is proved in two steps: First, we reduce from
\DRSAT\ to \XDRSAT. \XDRSAT\ is a version of SAT that counts the
assignments that satisfy \emph{exactly} one literal per clause. From
\XDRSAT\ we can reduce to independent sets using a modified version of
a standard reduction from SAT to independent sets \cite[Theorem
  9.4]{papadimitriou}. We also use the fact that the exponential time
hypothesis with number of variables as a parameter is equivalent to
the hypothesis with number of clauses as parameter. Impagliazzo,
Paturi, and Zane proved this for the decision version
\cite{impagliazzo_paturi_zane}. We use the following version for
counting problems:

\begin{thm}[{\cite[Theorem 1]{dell_husfeldt_wahlen}}]
\label{thm:eth_variables_vs_clauses}
  For all $d\geq 3$, \ETH\ holds if and only if \#$d$-SAT requires
  time $exp(\Omega(m))$.
\end{thm}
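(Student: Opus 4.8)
The plan is to deduce the theorem from a \emph{counting} analogue of the sparsification lemma of Impagliazzo, Paturi and Zane \cite{impagliazzo_paturi_zane}. One of the two implications is routine. Suppose \ETH\ fails, so that for every $c>0$ there is an algorithm solving \DRSAT\ in time $\exp(cn)$; I would use this to solve \#$d$-SAT in time $\exp(o(m))$, contradicting the statement that \#$d$-SAT needs time $\exp(\Omega(m))$. Given a $d$-CNF $\phi$ with $m$ clauses, first delete the variables occurring in no clause (each such variable contributes a factor $2$ to the final count), so that the number of variables is at most $dm$. Then replace every clause of width $w>3$ by $w-2$ width-$3$ clauses using $w-3$ fresh variables in the textbook fashion; this is solution-preserving, since the fresh variables are forced, and yields a $3$-CNF on $O(dm)$ variables whose satisfying assignments can, by assumption, be counted in time $\exp(o(dm))=\exp(o(m))$ for fixed $d$.

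For the other implication I would argue the contrapositive: assuming that for every $\delta>0$ there is an algorithm $A_\delta$ solving \#$d$-SAT in time $\exp(\delta m)$, I want to derive that \ETH\ fails. The engine is a \textbf{counting sparsification lemma}: for every $\eps>0$ there is a constant $C=C(\eps,d)$ and an algorithm running in time $2^{\eps n}\poly(n)$ that, on a $d$-CNF $\phi$ with $n$ variables, outputs $d$-CNFs $\psi_1,\dots,\psi_t$ with $t\le 2^{\eps n}$, each $\psi_i$ having at most $Cn$ clauses, whose sets of satisfying assignments \emph{partition} that of $\phi$; in particular $\#\phi=\sum_{i=1}^t\#\psi_i$. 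Granting this, take a \DRSAT\ instance on $n$ variables, apply the lemma with $\eps$ chosen small relative to a target exponent, evaluate each $\#\psi_i$ using $A_\delta$ with $\delta=\eps/C$, and add the results. The running time is $2^{\eps n}\poly(n)+t\cdot\exp(\delta\cdot Cn)\le 2^{O(\eps n)}$, so for $\eps$ small enough \DRSAT\ is solved in time $\exp(cn)$ for an arbitrarily small constant $c>0$, and hence \ETH\ fails.

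It remains to prove the counting sparsification lemma, and I expect essentially all of the work to be here. I would run the IPZ sparsification procedure with only a cosmetic change: while the current formula contains a set $H$ of at most $d-1$ literals that is contained in more clauses than a threshold depending on $|H|$ and $\eps$, branch into two subproblems and recurse — in one branch add the unit clauses $\bar\ell$ for every $\ell\in H$ (equivalently, delete the literals of $H$ from every clause containing $H$), in the other add the single clause $\bigvee_{\ell\in H}\ell$ of width $|H|<d$ (equivalently, delete every clause that contains $H$); when no such $H$ remains, IPZ's counting argument bounds the number of clauses at the leaf by $O_{\eps,d}(n)$. IPZ's weighted potential on clause widths shows that both branches strictly decrease the potential, and for a suitable choice of the weights the recursion tree has at most $2^{\eps n}$ leaves, with the whole procedure running in $2^{\eps n}\poly(n)$ time.

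The one genuinely new ingredient for the counting version — and the point I expect to be the main obstacle — is that the two branches at each node must \emph{partition} the satisfying assignments of the current formula rather than merely cover them, as suffices in the decision version. This is exactly where the specific form of the branching matters: an assignment that satisfied the current formula either falsifies every literal of $H$, in which case it satisfies the first branch (it satisfies each shortened clause $C\setminus H$ precisely because it satisfied $C$ without using a literal of $H$), or it satisfies at least one literal of $H$, in which case it satisfies the second branch — and no assignment can do both. Propagating this disjointness through the recursion tree upgrades IPZ's conclusion ``$\phi$ is equisatisfiable with $\bigvee_i\psi_i$'' to the additive identity ``$\#\phi=\sum_i\#\psi_i$'', which is what the counting reduction needs. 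Once this is in place, the remaining bookkeeping — tuning thresholds and potential weights, bounding the leaf clause count, and controlling the tree size — is IPZ's analysis and I do not expect surprises there.
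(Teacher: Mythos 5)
This theorem is stated in the paper without proof --- it is quoted as Theorem~1 of Dell, Husfeldt, and Wahl\'en --- so there is no in-paper argument to compare yours against. Your plan nonetheless follows the structure of the published proof: an instance-preserving width reduction for the routine direction, and a counting variant of the Impagliazzo--Paturi--Zane sparsification lemma for the other, the one genuinely new ingredient being that the two branches at each node must \emph{partition} rather than merely cover the solution set. Your disjointness fix (assert $\bar\ell$ for every $\ell\in H$ on one side, $\bigvee_{\ell\in H}\ell$ on the other) is exactly the right observation, and the derivation of the theorem from the lemma (choosing $\delta=\eps/C$ and summing over the $\psi_i$) is correct.

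The routine direction as written contains a real error, however. The textbook clause splitting $(\ell_1\vee\cdots\vee\ell_w)\mapsto(\ell_1\vee\ell_2\vee y_1)\wedge(\bar y_1\vee\ell_3\vee y_2)\wedge\cdots\wedge(\bar y_{w-3}\vee\ell_{w-1}\vee\ell_w)$ is \emph{not} parsimonious, and the fresh variables are \emph{not} forced: already for $w=4$ with $\ell_1,\ell_3$ true and $\ell_2,\ell_4$ false, both $(\ell_1\vee\ell_2\vee y_1)$ and $(\bar y_1\vee\ell_3\vee\ell_4)$ hold regardless of $y_1$, so one satisfying assignment of the original clause corresponds to two of the replacement. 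You need a genuinely parsimonious encoding, e.g.\ introduce $y_i$ representing the prefix disjunction $\ell_1\vee\cdots\vee\ell_i$ via the equivalence $y_i\leftrightarrow(y_{i-1}\vee\ell_i)$, written as three width-$\le 3$ clauses per step, plus a final unit clause $y_w$; this forces the $y_i$'s and still costs $O(d)$ clauses and fresh variables per original clause, so the size bound survives. With that fix the width reduction goes through. For the sparsification direction your sketch is structurally sound; you rightly flag that the IPZ bookkeeping must be rechecked after inserting the extra unit clauses, and while I believe this is routine (they only perturb the constants), that is the step you have not actually verified.
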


Our main result (Theorem~\ref{thm:WISexponential}) is inspired by
recent work of Dell, Husfeldt, and Wahlén on the Tutte polynomial
\cite[Theorem 3(ii)]{dell_husfeldt_wahlen}. These authors use Sokal's
formula for the Tutte polynomial of generalized Theta graphs
\cite{sokal_chromatic}. In Section~\ref{sec:cloning}, we devise and
analyze $S$-clones, a new graph transformation that can be used with
the independent set polynomial in a similar way as generalized Theta
graphs with the Tutte polynomial. $S$-clones are a combination of
vertex cloning (used under this name for the interlace polynomial
\cite{interlace_hard}, but generally used in different situations for
a long time \cite[Theorem 1, Reduction 3.]{valiant_enumeration},
\cite{jerrum_valiant_vazirani}, \cite[Lemma A.3]{roth}) and addition
of paths. Having introduced $S$-clones, we are able to transfer the
construction of Dell et al.\ quite directly to the independent set
polynomial. The technical details are more involved than in the
previous work on the Tutte polynomial, but the general idea is the
same: Use the graph transformation ($S$-clones) to evaluate the graph
polynomial (independent set polynomial) at different points, and use
the result for interpolation. An important property of the
construction is that the graph transformation increases the size of
the graph only polylogarithmically. More details on this can be found
at the beginning of Section~\ref{sec:interpolation}.

Before we start with the detailed exposition, let us mention that the
reductions we devise for the independent set polynomial can be used
with the interlace polynomial \cite{arratia_two_var_interl,
  courcelle_interlace_final} as well \cite{hoffmann}.

%%%%%%%%%%%%%%%%%%%%%%%%%%%%%%%%%%%%%%%%%%%%%%%%%%%%%%%%%%%%%%%%%%%%%%
\section{Reduction from counting SAT to counting independent sets}
\label{sec:sat2IS}

We give the details of a reduction from SAT to independent sets which
increases the instance size only by a constant factor and preserves
the number of solutions. This yields the conditional lower bound for
counting independent sets of size $n/3$ (Theorem~\ref{thm:IS}).

\begin{problem}
\ProbName \XDRSAT
\ProbInput Boolean formula $\phi=C_1\And\ldots \And C_m$ where each
clause $C_i$ is a disjunction of two or three literals over variables
$x_1, \ldots, x_n$
\ProbOutput Number of assignments for $\phi$ such that in every clause exactly one literal is satisfied
\end{problem}

By a \defexpr{polynomial time reduction} from a counting problem $A$
to a counting problem $B$ we mean a polynomial time algorithm that
maps an input instance $x$ for $A$ to an input instance $y$ for $B$
such that the number of solutions for $x$ equals the number of
solutions for $y$.

\begin{lem}
  \label{lem:sat2xsat}
  There is a polynomial time reduction from \DRSAT\ to \XDRSAT\ that
  maps formulas with $m$ clauses to formulas with $O(m)$ clauses.
\end{lem}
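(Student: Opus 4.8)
The plan is to use the classical gadget that converts an OR-clause into an exact-one clause, but to be careful about preserving the solution count exactly rather than merely preserving satisfiability. Recall that a clause $(\ell_1\Or\ell_2\Or\ell_3)$ is satisfied iff at least one literal is true; we want to replace it by a conjunction of clauses, each of width $\leq 3$, such that for every assignment to the original variables that satisfies the clause there is \emph{exactly one} extension to the auxiliary variables making all the new clauses ``exactly-one''-satisfied, and for every assignment that falsifies the clause there is \emph{no} such extension. The standard trick is to introduce auxiliary variables that carry ``carry'' or ``choice'' information: for a 3-literal clause one introduces a constant number of new variables and a constant number of new 2- and 3-literal exact-one clauses simulating the statement ``$\ell_1+\ell_2+\ell_3\geq 1$'' while forcing the auxiliary bits to be determined uniquely by the $\ell_i$. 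Two-literal clauses $(\ell_1\Or\ell_2)$ are handled by the analogous, even simpler, gadget. Since each original clause is replaced independently using $O(1)$ new variables and $O(1)$ new clauses, a formula with $m$ clauses is mapped to one with $O(m)$ clauses, and the construction is clearly computable in polynomial time.

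First I would fix, once and for all, a small gadget $G_3$ on literals $a,b,c$: a CNF $\psi_3(a,b,c;\mathbf{z})$ over auxiliary variables $\mathbf{z}$, all of whose clauses have width $2$ or $3$, with the property that the number of exact-one-satisfying assignments of $\psi_3$ restricted to each fixed $(a,b,c)\in\zo^3$ equals $1$ if $a\Or b\Or c$ and $0$ otherwise. One clean way to obtain this is to start from a known X3SAT-gadget for OR and then, if it introduces spurious multiplicity, append further exact-one clauses that pin down the auxiliary variables as explicit functions of $a,b,c$; one checks the $2^3$ cases by hand. Second I would record the analogous gadget $G_2$ for two-literal clauses (this one is essentially trivial: $(a\Or b)$ becomes the single exact-one clause on $a,b$ plus one auxiliary literal absorbing the case $a=b=1$). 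Third, given an input $\phi=C_1\And\dots\And C_m$ of \DRSAT, I form $\phi'$ by taking, for each clause $C_i$, a \emph{fresh} copy of the appropriate gadget on the literals of $C_i$ with its own private auxiliary variables, and conjoining all of them. Fourth, the count argument: because the auxiliary variable blocks are disjoint, the exact-one-satisfying assignments of $\phi'$ are in bijection with pairs (assignment to $x_1,\dots,x_n$ satisfying every $C_i$) $\times$ (for each $i$, the unique forced auxiliary block), hence $\#\mathrm{X3SAT}(\phi')=\#\mathrm{3SAT}(\phi)$. Finally I would tally sizes: $\phi'$ has $n+O(m)$ variables and $O(m)$ clauses, each of width $\leq 3$, and $\phi'$ is produced in time polynomial in $n+m$.

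The main obstacle is purely in the gadget design: guaranteeing that the auxiliary assignment is \emph{unique} for each satisfying pattern of $(a,b,c)$, not merely existent. A naive OR-to-exact-one translation typically allows several auxiliary completions (for instance, when two of the three literals are true), which would multiply the count in an uncontrolled, pattern-dependent way. The fix is to make the auxiliary bits explicit deterministic functions of the input literals and to enforce those functional relations by additional width-$\leq 3$ exact-one clauses; the verification then amounts to checking a constant-size truth table, which is routine but must be done carefully. Once $G_3$ (and $G_2$) are pinned down with the exact-count property, the rest of the argument — disjoint auxiliary blocks, bijection of solutions, linear size blow-up, polynomial running time — is immediate.
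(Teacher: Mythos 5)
Your proposal takes the same high-level route as the paper: replace each clause of the \DRSAT\ instance by a constant-size gadget with its own fresh auxiliary variables, chosen so that a satisfying assignment of the clause has \emph{exactly one} extension to an exact-one assignment of the gadget, a falsifying one has none, and the counts multiply out to a bijection. You also correctly pinpoint the one property that a decision-style gadget does not automatically have — uniqueness of the auxiliary extension — and that is indeed the crux.

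The gap is that you never produce such a gadget. Stating the required property of $\psi_3(a,b,c;\mathbf z)$ and asserting that one can "start from a known X3SAT-gadget and append further exact-one clauses that pin down the auxiliary variables," with the verification being a routine $2^3$-case check, is precisely the content of the lemma; declaring it doable is not the same as doing it. The paper supplies this missing piece concretely, citing Schaefer: for a clause $(a\Or b\Or c)$ take
\[
F=(a\Or u_1 \Or u_4)(b\Or u_2\Or u_4)(u_1\Or u_2\Or u_5)(u_3\Or u_4\Or u_6)(c\Or u_3)
\]
with six fresh variables, and checks the eight cases. Your sketch of $G_2$ ("the exact-one clause on $a,b$ plus one auxiliary literal absorbing $a=b=1$") is in any case not right as stated: the single exact-one clause $(a\Or b\Or z)$ still has a valid extension ($z=1$) when $a=b=0$, so it does not reject the falsifying assignment — another sign that the gadget design cannot be waved away. (It is also unnecessary here: two-literal clauses arise in the \emph{output} of the reduction, as in Schaefer's $(c\Or u_3)$, not in the \#3-SAT input.) The surrounding argument — fresh auxiliaries per clause, disjoint blocks, product/bijection of solutions, $O(m)$ clause blow-up, polynomial time — is correct and matches the paper, but the lemma is not proved until a concrete gadget with the unique-extension property is exhibited and verified.
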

\begin{proof}
  Schaefer \cite[Lemma 3.5]{DBLP:conf/stoc/Schaefer78} gives the
  following construction. For a clause $C=(a\Or b \Or c)$, define
  $F=(a\Or u_1 \Or u_4)(b\Or
  u_2\Or u_4)(u_1\Or u_2\Or u_5)(u_3\Or u_4\Or u_6)(c\Or u_3)$.

  It is not hard to check that every assignment that satisfies $C$ in
  the usual sense corresponds to exactly one assignment that satisfies
  $F$ in the sense of \XDRSAT.

  Using this construction, we reduce \DRSAT\ to \XDRSAT. If an
  instance $\phi$ for \DRSAT\ is given, we construct an instance
  $\phi'$ for \XDRSAT\ by applying the above construction for every
  clause in $\phi$, each time using ``fresh'' variables $u_1, \ldots,
  u_6$. 
\end{proof}

\begin{lem}
  \label{lem:ISexplowerbound}
  There is a polynomial time reduction from \XDRSAT\ to \nIS\ that maps
  formulas with $m$ clauses to graphs with $3m$ vertices.
\end{lem}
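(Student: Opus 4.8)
The plan is to adapt the classical SAT-to-independent-set gadget (as in \cite[Theorem 9.4]{papadimitriou}) so that it works with the \emph{exactly-one-literal} semantics of \XDRSAT\ and, crucially, produces a graph in which the independent sets of the target size $n/3$ are in bijection with the X3SAT-solutions. Given a formula $\phi=C_1\And\ldots\And C_m$, for each clause $C_i$ introduce one vertex per literal occurrence in $C_i$ and join all vertices belonging to the same clause by edges, forming a clique of size two or three (a ``clause gadget''). Since the number of literal occurrences is $2m$ or $3m$ in general, I would first normalise $\phi$ so that every clause has exactly three literals — e.g. pad a two-literal clause $(a\Or b)$ to $(a\Or b\Or b)$, or more carefully duplicate a literal, so that the exactly-one semantics is preserved — giving exactly $3m$ vertices, grouped into $m$ triangles. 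Then, across clauses, add an edge between two literal-occurrence vertices whenever the corresponding literals are \emph{complementary} (one is $x_j$, the other $\neg x_j$); this is the standard ``consistency'' edge.

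The key point is the counting correspondence. In each triangle an independent set can pick at most one vertex, so any independent set has size $\le m$; it has size exactly $m=n/3$ (after we arrange $n=3m$, i.e. rename so the graph has $n$ vertices with $n/3=m$) iff it picks exactly one vertex from every triangle. Such a choice selects, for each clause, exactly one literal to be ``the'' satisfied literal; the consistency edges guarantee that the set of chosen literals never contains both $x_j$ and $\neg x_j$, hence extends to a well-defined partial assignment, and conversely any X3SAT-satisfying assignment picks out exactly one true literal per clause, yielding an independent set of size $m$. I would check that this map is a bijection: distinct X3SAT assignments can in principle induce the same literal-selection only on variables that are forced, but since every variable appearing in $\phi$ occurs in some clause and the selected literal there fixes its value, the selection determines the assignment on all relevant variables, and variables not occurring in $\phi$ can be handled by a convention (or assumed absent). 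Thus the number of size-$n/3$ independent sets of the constructed graph equals the number of X3SAT-solutions of $\phi$, and the construction is clearly polynomial time, producing a graph on exactly $3m$ vertices.

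The main obstacle is making the literal-duplication/padding step interact correctly with the exactly-one-in-each-clause requirement while keeping the vertex count at exactly $3m$: a clause with fewer than three literals must be expanded to three literal-occurrence vertices without changing which assignments satisfy it in the X3SAT sense, and one has to ensure the bijection is not broken by the extra vertices (for instance, a padded occurrence of $b$ must carry the same consistency edges as the original $b$, so that choosing it is equivalent to choosing $b$). A clean way to handle this is to observe that Schaefer's construction in Lemma~\ref{lem:sat2xsat} already outputs clauses of width two and three, so I would either push the width-three normalisation into that lemma, or absorb width-two clauses here by the duplication trick and verify the bijection directly. Once the gadget is fixed, everything else — independence of the size-$n/3$ sets, the bijection with solutions, and the $O(1)$-factor size blow-up — is routine.
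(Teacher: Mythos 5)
Your construction is missing a class of edges that is essential to the exactly-one semantics, and as a result it does not yield a bijection. You only add edges between complementary literal occurrences. This is the classical gadget for ordinary 3SAT, where an assignment need only satisfy \emph{at least} one literal per clause. For \XDRSAT\ the set of vertices chosen (one per triangle) must encode, for every clause, which literal is the \emph{unique} true one; this forces all other literals of that clause to be false, and that constraint propagates across clauses sharing a variable. Your graph does not enforce the propagation. Concretely, take $\phi=(a\Or b\Or c)\And(b\Or d\Or e)$. With only complementary-literal edges, $\{v_{1,1},v_{2,1}\}$ (the $a$-vertex of $C_1$ and the $b$-vertex of $C_2$) is independent and has size $m=2$, yet it corresponds to no X3SAT solution: choosing $a$ in $C_1$ forces $b$ false, while choosing $b$ in $C_2$ forces $b$ true. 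The paper's construction adds, for every pair $u,v$ with $\ell(u)=\ell(v)$, edges from $u$ to the two \emph{other} vertices of $v$'s triangle (and symmetrically), and for complementary pairs it also joins the two ``other-vertex'' pairs completely, not just $u$ to $v$. These extra edges are exactly what rules out inconsistent selections such as the one above.

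A second, smaller problem is the padding step. Rewriting $(a\Or b)$ as $(a\Or b\Or b)$ changes the \XDRSAT\ semantics: if $b$ is true, two literals of the padded clause are satisfied, so assignments that used to be counted are now excluded. The paper explicitly assumes no literal appears twice in a clause (and Schaefer's gadget already respects this), so this padding is not available to you. In short, the triangle-plus-consistency-edge skeleton is right, but without the same-literal edges and the stronger complementary-literal edges the map from size-$m$ independent sets to X3SAT solutions is not injective, and the lemma's equality of counts fails.
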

\begin{proof}
  We reduce from \XDRSAT. Let $\phi=C_1\And \ldots \And C_m$ be an
  instance for \XDRSAT, i.e.\ a 3-CNF formula with $m$ clauses in $n$
  variables. We assume that every variable appears in $\phi$,
  otherwise a factor of $2^r$ is introduced in the following
  reduction, where $r$ is the number of variables that do not appear
  in $\phi$. Furthermore, we assume that no literal appears twice in a
  clause and that, if a literal $\ell$ appears in a clause, its
  negation $\neg \ell$ does not appear in the same clause. The
  construction in Lemma~\ref{lem:sat2xsat} complies with these
  assumptions. Therefore, we do not lose generality.

  For each clause $C_i=\ell_{i,1} \Or \ell_{i,2}\Or \ell_{i,3}$ of
  $\phi$, we construct a triangle $T_i$ whose vertices
  $v_{i,1},v_{i,2}, v_{i,3}$ are labeled by
  $\ell(v_{i,j})=\ell_{i,j}$, $1\leq j\leq 3$, the literals of
  $C_i$. In this way, we obtain the vertex set $V=\{v_{i,j}\ |\ 1\leq
  i\leq m, 1\leq j\leq 3\}$ for the \nIS\ instance $G$. Besides the
  triangle edges, we add the following edges to $G$: For each pair
  $\{u,v\}$ of vertices, where $\ell(u)=\ell(v)$ or $\ell(u) = \neg
  \ell(v)$, let $u_2, u_3$ be the other two vertices in $u$'s triangle
  and $v_2, v_3$ be the other two vertices in $v$'s triangle. If
  $\ell(u)=\ell(v)$, we connect $u$ to $v_2$ and $v_3$, and we connect
  $v$ to $u_2$ and $u_3$. If $\ell(u)=\neg \ell(v)$, we connect $u$
  and $v$, and we connect every vertex of $\{u_2, u_3\}$ to every
  vertex of $\{v_2, v_3\}$.

  It is not difficult to argue that the number of satisfying
  assignment for $\phi$ (i.e.\ assignments such that in each clause
  exactly on literal evaluates to true) equals the number of
  independent sets $A$ of $G$ with $|A|=m$.
\end{proof}

\begin{proof}[Proof of Theorem~\ref{thm:IS}]
  Follows from Theorem~\ref{thm:eth_variables_vs_clauses},
  Lemma~\ref{lem:sat2xsat}, and Lemma~\ref{lem:ISexplowerbound}.
\end{proof}

%%%%%%%%%%%%%%%%%%%%%%%%%%%%%%%%%%%%%%%%%%%%%%%%%%%%%%%%%%%%%%%%%%%%%%
\section{$S$-clones and the Independent Set Polynomial}
\label{sec:cloning}

In this section, we analyze the effect of the following graph
transformation on the independent set polynomial.

\begin{defi}
  Let $S$ be a finite multiset of nonnegative integers and $G=(V,E)$
  be a graph. We define the \defexpr{$S$-clone} $G_S=(V_S,E_S)$ of $G$
  as follows:
  \begin{itemize}
  \item For every vertex $a\in V$, there are $|S|$ vertices
    $a(|S|):=\{a_1, \ldots, a_{|S|}\}$ in $V_S$.
  \item For every edge $uv\in E$, there are edges in $E_S$ that
    connect every edge in $u(|S|)$ to every edge in $v(|S|)$.
  \item Let $S=\{s_1, \ldots, s_\ell\}$. For every vertex $a\in V$, we
    add a path of length $s_i$ to $a_i$, the $i$th clone of
    $a$. Formally: For every $i$, $1\leq i\leq |S|$, and every $a\in
    V$ there are vertices $a_{i,1}, \ldots, a_{i,s_i}$ in $V_S$ and
    edges $a_ia_{i,1}, a_{i,1}a_{i,2}, \ldots, a_{i,s_i-1}a_{i,s_i}$
    in $E_S$.
  \item There are no other vertices and no other edges in $G_S$ but
    the ones defined by the preceding conditions.
  \end{itemize}
\end{defi}

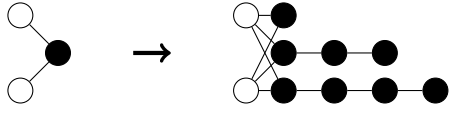
\begin{figure}
  \begin{center}
    \tikzstyle{knoten}=[draw,circle]
    \tikzstyle{kn2}=[knoten,fill]
    \begin{tikzpicture}
      \node (u0) at (0,1) [knoten] {};
      \node (u2) at (0,0) [knoten] {};
      \node (u1) at (0.5,0.5) [kn2] {};
      \draw (u0)--(u1)--(u2);

      \draw [->,ultra thick] (1.5,0.5)--(2,0.5);

      \node (v0) at (3,1) [knoten] {};
      \node (v2) at (3,0) [knoten] {};
      \node (v10) at (3.5,1) [kn2] {};
      \node (v11) at (3.5,0.5) [kn2] {};
      \node (v12) at (3.5,0) [kn2] {};
      \draw (v0)--(v10)--(v2);
      \draw (v0)--(v11)--(v2);
      \draw (v0)--(v12)--(v2);
      \node[kn2] (v111) at (v11) [right=0.5cm] {};
      \node[kn2] (v112) at (v111) [right=0.5cm] {};
      \draw (v11)--(v111)--(v112);
      \node[kn2] (v121) at (v12) [right=0.5cm] {};
      \node[kn2] (v122) at (v121) [right=0.5cm] {};
      \node[kn2] (v123) at (v122) [right=0.5cm] {};
      \draw (v12)--(v121)--(v122)--(v123);
    \end{tikzpicture}
  \end{center}
  \caption{Effect of a $\{0,2,3\}$-clone on a single vertex.}
  \label{fig:Sclone}
\end{figure}

The effect of $S$-cloning on a single vertex is illustrated in
Figure~\ref{fig:Sclone}. The purpose of $S$-clones is that $I(G_S;x)$
can be expressed in terms of $I(G;x(S))$, where $x(S)$ is some number
derived from $x$ and $S$. For technical reasons, we restrict ourselves
to $x$ that fulfill the following condition:

\begin{defi}
  Let $x\in \R$. We say that $x$ is \defexpr{nondegenerate for path
    reduction} if $x>-\frac 1 4$ and $x\neq 0$. Otherwise, we say that
  $x$ is degenerate for path reduction.
\end{defi}

\begin{defi}
  \label{def:lambda}
  Let $x\in \R$ be nondegenerate for path reduction. Then we define
  $\lo$ and $\lt$ to be the two roots of
  \begin{equation}
    \label{eq:lambda}
    \lambda^2-\lambda-x,
  \end{equation}
  i.e.
  \begin{equation}
    \label{eq:lolt}
    \lambda_{1,2}=\frac 1 2 \pm \sqrt{\frac 1 4 + x}.    
  \end{equation}
\end{defi}

The following condition ensures that \eqref{eq:xS} is well-defined
(cf.\ \eqref{eq:Ck}).

\begin{defi}
  Let $x\in \R$ be nondegenerate for path reduction. We say that a set
  $S$ of nonnegative integers is \defexpr{compatible} with $x$ if
  $\lo^{s+2}\neq\lt^{s+2}$ for all $s\in S$.
\end{defi}

Now we can state the effect of $S$-cloning on the independent set
polynomial:

\begin{thm}
  \label{thm:S_cloning}
  Let $G=(V,E)$ be a graph, $x$ be nondegenerate for path reduction,
  and $S$ be a finite multiset of nonnegative integers that is
  compatible with $x$. Then we have
  \[ I(G_S;x) = (\prod_{s\in S}C_s)^{|V|}I(G;x(S)), \]
  where 
  \begin{eqnarray}
    \label{eq:xS}
    x(S)+1&=&\prod_{s\in S}\Big(1+\frac {B_s}{C_s}\Big)\quad \text{with} \\
    \label{eq:Bk}
    B_k &=& \frac x {\lambda_2-\lambda_1} \cdot \big(-\lambda_1^{k+1}+\lambda_2^{k+1}\big), \\
    \label{eq:Ck}
    C_k &=& \frac 1 {\lambda_2-\lambda_1} \cdot \big(-\lambda_1^{k+2} + \lambda_2^{k+2}\big),
  \end{eqnarray}
  and $\lo$, $\lt$ as in Definition~\ref{def:lambda}.
\end{thm}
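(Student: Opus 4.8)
The plan is to compute $I(G_S;x)$ by first reducing the pendant paths attached to each clone, and then dealing with the cloned vertices themselves. The key observation is that the $S$-clone is built in two layers: for each vertex $a\in V$ we have $|S|$ clones $a_1,\dots,a_{|S|}$, the clones inherit all adjacencies of $G$ (with complete bipartite ``blow-ups'' on each edge), and to the $i$th clone $a_i$ we glue a path of length $s_i$. So the natural first step is a \emph{path reduction lemma}: if a single path of length $s$ is attached by one endpoint to a vertex $v$ of an arbitrary graph $H$, then $I(H;x)$ can be rewritten as $C_s\cdot I(H';x)$ plus a correction, where $H'$ is $H$ with a modified weight at $v$. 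Concretely, summing over whether the neighbour of $v$ along the path is in the independent set or not gives a two-term recursion whose characteristic polynomial is exactly $\lambda^2-\lambda-x$ from \eqref{eq:lambda}; this is why $\lo,\lt$ appear, and solving the recursion with the correct boundary conditions yields the closed forms $B_k$ and $C_k$ in \eqref{eq:Bk}--\eqref{eq:Ck}. The nondegeneracy condition $x>-\frac14$, $x\neq0$ guarantees $\lo\neq\lt$ (so $\lambda_2-\lambda_1\neq0$) and the compatibility condition $\lo^{s+2}\neq\lt^{s+2}$ guarantees $C_s\neq0$, so that dividing by $C_s$ in \eqref{eq:xS} is legitimate.

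Next I would iterate the path reduction over all $|S|$ pendant paths hanging off a single original vertex $a$. After reducing the path of length $s_i$ attached to $a_i$, the clone $a_i$ acquires an effective vertex weight of the form $x(1+B_{s_i}/C_{s_i})$ up to the global factor $C_{s_i}$; doing this for every $i$ and pulling out $\prod_{s\in S}C_s$ per vertex (hence the exponent $|V|$) leaves a graph in which every original vertex $a$ has been replaced by $|S|$ mutually nonadjacent clones, each carrying its own weight, and the clone-block for $a$ is joined completely to the clone-block for $b$ whenever $ab\in E$. Now I contract each clone-block back to a single vertex: because the clones of $a$ form an independent set and have identical neighbourhoods (the whole clone-block of each $G$-neighbour), a subset of the clone-block of $a$ is ``active'' in an independent set exactly when no neighbouring block is active, and the total weight contributed by all nonempty-or-empty choices within the block multiplies out. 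Summing the geometric-type contributions over the block shows the block behaves like a single vertex of weight $\prod_{s\in S}\bigl(1+B_s/C_s\bigr)-1=x(S)$ — this is precisely the definition \eqref{eq:xS}, engineered so that $1+$(block weight) factorizes. Hence $I(G_S;x)=(\prod_{s\in S}C_s)^{|V|}\,I(G;x(S))$.

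The main obstacle is the bookkeeping in the second step: one must be careful that reducing a pendant path genuinely decouples from the rest of the graph, i.e.\ that the recursion for the path depends only on the weight at the attachment vertex and not on $H$'s structure, and that after all $|S|$ reductions at vertex $a$ the clones really do collapse to a single effective weight without cross-terms between different blocks. A clean way to organize this is to prove the path-reduction identity in the form ``$I(H;x)=C_s\,I(H_{v\leftarrow w};x)$ where $H_{v\leftarrow w}$ replaces the weight $x$ at $v$ by a new weight $w$ depending on $s$ and $x$'' for a graph with arbitrary per-vertex weights, so that the identity composes associatively; then the whole theorem is just repeated substitution plus the algebraic identity defining $x(S)$. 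I expect the only real calculation is verifying the closed forms $B_k,C_k$ satisfy the path recursion and the right initial conditions, which is the routine linear-recurrence computation that I will not grind through here.
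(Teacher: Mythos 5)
Your proposal follows essentially the same route as the paper: first reduce each pendant path via a two-term recursion with characteristic polynomial $\lambda^2-\lambda-x$ (the paper's Theorem~\ref{thm:pathreduction}, obtained by iterating Lemma~\ref{lem:deg1red} and diagonalizing the matrix $M(x)$), and then collapse the now-twin clones of each original vertex using the merge rule $y_a+1=(1+x_{a_1})(1+x_{a_2})$ (the paper's Lemma~\ref{lem:clone}), so that $x(S)+1=\prod_{s\in S}(1+B_s/C_s)$. One small slip in the middle of your argument: after reducing the path of length $s_i$ at $a_i$, the effective weight on $a_i$ is $B_{s_i}/C_{s_i}$, not $x(1+B_{s_i}/C_{s_i})$; this does not propagate, since the subsequent merging step you wrote down is the correct one.
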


The rest of this section is devoted to the proof of
Theorem~\ref{thm:S_cloning}. 

\subsection{Notation}

We use a \emph{multivariate} version of the independent set
polynomial.  This means that every vertex has its own variable
$x$. Formally, we define a \defexpr{vertex-indexed} variable $\gv x$
to be a set of of independent variables $x_a$ such that, if $G=(V,E)$
is a graph, $\gv x$ contains $\Sp {x_a} {a\in V}$. If $\gv x$ is a
vertex-indexed variable and $A$ is a subset of the vertices of $G$, we
define
\[ x_A := \prod_{a\in A} x_a.\]
The \defexpr{multivariate independent
  set polynomial} \cite{scott_sokal} is defined as
\begin{equation}
\label{eq:indset_mvar}
  I(G;\gv x)= \sum_{\substack{A\subs V\\A\ \text{independent}}} x_A.
\end{equation}
We have $I(G;x)=I(G;\gv x)[x_a:=x\ |\ a\in V]$, i.e.\ the
single-variable independent set polynomial is obtained from the
multivariate version by substituting every vertex-indexed variable
$x_a$ by one and the same ordinary variable $x$.

We will use the following operation on graphs: Given a graph $G$ and a
vertex $b$ of $G$, $\vdel G b$ denotes the graph that is obtained from
$G$ by removing $b$ and all edges incident to $b$. \label{def:vdel}

\subsection{Proof of Theorem~\ref{thm:S_cloning}}

Let us first analyze the effect of a
single leaf on the independent set polynomial.

\begin{lem}
  \label{lem:deg1red}
  Let $G=(V,E)$ be a graph and $a\neq b$ be two vertices such that $a$
  is the only neighbor of $b$. Then, as a polynomial equation, we have
  \begin{equation}
    \label{eq:deg1red}
    I(G,x_a,x_b)=(1+x_b)I(\vdel G b, x_a/(1+x_b)),
  \end{equation}
  where $I(G,y,z)$ denotes $I(G;\gv x)$ with $x_a=y$ and $x_b=z$ and
  $I(\vdel G b, z)$ denotes $I(\vdel G b; \gv y)$ with $y_a=z$ and
  $y_v=x_v$ for all $v\in V\setminus \{a,b\}$. ($\vdel G b$ is defined
  on Page~\pageref{def:vdel}.)
\end{lem}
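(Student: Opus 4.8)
The plan is to expand the multivariate independent set polynomial $I(G;\gv x)$ from \eqref{eq:indset_mvar} by conditioning on whether an independent set contains the leaf $b$, and then to match the outcome with a parallel expansion of the right-hand side. Write $G' = \vdel G b$. First I would observe that, since $ab\in E$ and $a$ is the only neighbour of $b$, the independent sets of $G$ split into two disjoint families: those that avoid $b$, which are exactly the independent sets of $G'$; and those that contain $b$, which are exactly the sets $A\cup\{b\}$ with $A$ an independent set of $G'$ satisfying $a\notin A$ (the condition $a\notin A$ is forced by the edge $ab$, and adjoining $b$ introduces no further constraint because $b$ has no other neighbour). Grouping the defining sum accordingly yields
\[
  I(G;\gv x) \;=\; \sum_{\substack{A\subs V\setminus\{b\}\\ A\ \text{indep.\ in }G'}} x_A \;+\; x_b\!\!\sum_{\substack{A\subs V\setminus\{a,b\}\\ A\ \text{indep.\ in }G'}} x_A \;=\; I(G';\gv x) + x_b\,J,
\]
where I write $J$ for the sum of $x_A$ over all independent sets $A$ of $G'$ with $a\notin A$.

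Next I would expand the right-hand side. Let $\gv y$ be the vertex-indexed variable of $G'$ with $y_a = x_a/(1+x_b)$ and $y_v = x_v$ for every remaining vertex, and split the independent sets of $G'$ according to whether they contain $a$. For $a\notin A$ one has $y_A = x_A$, while for $a\in A$ one has $y_A = \frac{x_a}{1+x_b}\,x_{A\setminus\{a\}} = \frac{1}{1+x_b}\,x_A$; hence
\[
  (1+x_b)\,I(G';\gv y) \;=\; (1+x_b)\!\!\sum_{\substack{A\ \text{indep.\ in }G'\\ a\notin A}} x_A \;+\; \sum_{\substack{A\ \text{indep.\ in }G'\\ a\in A}} x_A \;=\; I(G';\gv x) + x_b\,J.
\]
Since both displays terminate in $I(G';\gv x) + x_b\,J$, the claimed identity follows.

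I do not expect a genuine obstacle here; the one point that deserves care is the meaning of the phrase \emph{as a polynomial equation}. Because each vertex variable occurs with degree at most one in every monomial of $I(G';\cdot)$, the substitution $y_a = x_a/(1+x_b)$ makes $I(G';\gv y)$ a polynomial in the $x_v$ and in $1/(1+x_b)$ in which $1/(1+x_b)$ appears to degree at most one; multiplying by $(1+x_b)$ therefore clears the denominator, so the equality — which one first derives in the field of rational functions — is in fact an identity of polynomials. That is precisely the form needed in order to iterate this reduction, once per clone path, in the proof of Theorem~\ref{thm:S_cloning}.
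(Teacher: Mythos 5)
Your proof is correct and uses essentially the same idea as the paper's: both expand $I(G;\gv x)$ by conditioning on whether an independent set contains $b$ (and then $a$), and match the result against the analogous expansion of $(1+x_b)\,I(\vdel G b;\gv y)$. Your explicit remark about why the identity holds at the level of polynomials, not just rational functions, is a useful clarification that the paper leaves implicit.
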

\begin{proof}
  Let $V'=V\setminus \{a,b\}$ and $i(A)=1$ if $A\subs V$ is a
  independent set in $G$ and $i(A)=0$ otherwise. We have
  \begin{align*}
    I(G,x_a,x_b) =& \sum_{A\subs
      V'}x_A\big(i(A)+x_ai(A\uni\{a\})+x_bi(A\uni\{b\})\big) \\
    =& \sum_{A\subs V'}x_A\big(i(A)+x_ai(A\uni\{a\})+x_bi(A)\big) \\
    =& \sum_{A\subs V'}x_A\big(i(A)(1+x_b)+x_ai(A\uni\{a\})\big),
  \end{align*}
  from which the claim follows.
\end{proof}

In other words, Lemma~\ref{lem:deg1red} states that a single leaf $b$
and its neighbor $a$ can be ``contracted'' by incorporating the weight
of $b$ into $a$. In a very similar way, two vertices with the same set
of neighbors can be contracted:

\begin{lem}
  \label{lem:clone}
  Let $G=(V,E)$ be a graph and $a,b\in V$ two vertices that have
  the same set of neighbors. Then
  \begin{equation}
    I(G;\gv x)=I(\vdel G b;\gv y),
  \end{equation}
  where $y_v=x_v$ for all $v\in V\setminus \{a,b\}$ and
  $y_a+1=(1+x_a)(1+x_b)$.
\end{lem}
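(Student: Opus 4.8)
The plan is to imitate the proof of Lemma~\ref{lem:deg1red}: expand $I(G;\gv x)$ by splitting the defining sum according to how a candidate set meets $\{a,b\}$, and then collapse the roles of $a$ and $b$ into the single renormalized vertex weight $y_a$.

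First I would record two elementary observations. (i) The vertices $a$ and $b$ cannot be adjacent: if $ab\in E$ then $b$ is a neighbor of $a$ but not of $b$ itself, contradicting that $a$ and $b$ have the same set of neighbors. (ii) Writing $V'=V\setminus\{a,b\}$ and letting $i(\cdot)$, $i'(\cdot)$ be the indicators of independence in $G$ and in $\vdel G b$ respectively, for every $A\subs V'$ the induced subgraphs of $G$ and $\vdel G b$ on $A$ agree, so $i(A)=i'(A)$; and since $a$ has the same neighborhood in $G$ and in $\vdel G b$, also $i(A\uni\{a\})=i'(A\uni\{a\})$. Moreover, because $a,b$ are non-adjacent with the same neighbors, for every $A\subs V'$ one has $i(A\uni\{a\})=i(A\uni\{b\})=i(A\uni\{a,b\})$; call this common value $j(A)$.

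Then I would compute
\[
  I(G;\gv x)=\sum_{A\subs V'}x_A\bigl(i(A)+x_a\,i(A\uni\{a\})+x_b\,i(A\uni\{b\})+x_ax_b\,i(A\uni\{a,b\})\bigr),
\]
which, by the identification of indicators above, equals $\sum_{A\subs V'}x_A\bigl(i(A)+(x_a+x_b+x_ax_b)\,j(A)\bigr)$. On the other hand, expanding $\vdel G b$ about the vertex $a$ gives
\[
  I(\vdel G b;\gv y)=\sum_{A\subs V'}y_A\bigl(i'(A)+y_a\,i'(A\uni\{a\})\bigr)=\sum_{A\subs V'}x_A\bigl(i(A)+y_a\,j(A)\bigr),
\]
where in the last equality I used $y_v=x_v$ for $v\in V'$ (hence $y_A=x_A$ for $A\subs V'$) together with $i'(A)=i(A)$ and $i'(A\uni\{a\})=j(A)$. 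Comparing the two displays term by term, the polynomials coincide exactly when $y_a=x_a+x_b+x_ax_b$, i.e.\ when $y_a+1=(1+x_a)(1+x_b)$, which is the hypothesis.

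Every step is a rearrangement of a finite sum, so there is no genuine obstacle; the only thing that needs care is the bookkeeping of the indicator functions—in particular verifying that $a$ and $b$ are forced to be non-adjacent and that removing $b$ leaves the neighborhood of $a$ unchanged, so that $j(A)=i(A\uni\{a\})$ is indeed the correct coefficient of $y_a$ in $I(\vdel G b;\gv y)$.
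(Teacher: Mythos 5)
Your proof is correct and takes exactly the approach the paper intends: the paper's own proof is the single sentence ``Similar to the proof of Lemma~\ref{lem:deg1red},'' and you carry out precisely that analogue, splitting the sum over $A\subs V\setminus\{a,b\}$ by how the set meets $\{a,b\}$ and using that $a,b$ are non-adjacent twins so that $i(A\uni\{a\})=i(A\uni\{b\})=i(A\uni\{a,b\})$. The bookkeeping, including the observation that ``same neighbors'' forces $ab\notin E$ and that $a$'s neighborhood is unchanged in $\vdel G b$, is all correct.
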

\begin{proof}
  Similar to the proof of Lemma~\ref{lem:deg1red}.
\end{proof}

Consider the following special case of a $S$-clone.

\begin{defi}
  Let $S$ be the multiset that consists of $k$ times the number $0$
  and $G=(V,E)$ be a graph. Then we write $\tc[k] G$ to denote
  $G_S$. We call $\tc[k] G$ the \defexpr{$k$-clone} of $G$.
\end{defi}

Applying Lemma~\ref{lem:clone} repeatedly yields the following
statement. Observations of this kind have been used for a long time
\cite{valiant_enumeration, jerrum_valiant_vazirani, roth,
  interlace_hard}.

\begin{thm}
  \label{thm:cloning}
  Let $G=(V,E)$ be a graph. We have the polynomial identity
  \begin{align*}
    I(\tc[k] G;x)=&I(G;(1+x)^k-1). \qed
  \end{align*}
\end{thm}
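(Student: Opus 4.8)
The plan is to prove Theorem~\ref{thm:cloning} by induction on $k$, using Lemma~\ref{lem:clone} as the inductive step. First I would observe that the $k$-clone $\tc[k] G$ can be built up one clone at a time: starting from $G$, repeatedly duplicate a vertex so that after $j$ duplications of a fixed vertex $a$ one has $j+1$ copies of $a$, all sharing the same neighborhood in the rest of the graph (and, crucially, being pairwise non-adjacent, since the $S$-clone construction with $S=\{0,\dots,0\}$ adds no edges among the clones and no pendant paths). This reduces matters to understanding what happens to a single vertex.

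Concretely, I would first prove a one-vertex lemma: if $G$ has a vertex $a$ with weight $x_a$, and $G'$ is obtained by adding a new vertex $a'$ with the same neighborhood as $a$ and weight $x_{a'}$, then by Lemma~\ref{lem:clone} we have $I(G';\gv x) = I(G;\gv y)$ where $y_a + 1 = (1+x_a)(1+x_{a'})$ and all other weights are unchanged. Iterating this $k-1$ times starting from a graph where $a$ already carries weight $x$ and adding $k-1$ further clones each of weight $x$, the weight on the single surviving vertex becomes $y$ with $1 + y = (1+x)^k$, i.e.\ $y = (1+x)^k - 1$. Doing this simultaneously for every vertex of $G$ (the operations on distinct vertices of $V$ commute, since cloning one vertex does not disturb the neighborhood structure needed to clone another), we collapse $\tc[k] G$ back to $G$ with every vertex reweighted from $x$ to $(1+x)^k - 1$. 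Substituting the uniform value $x$ for all the vertex-indexed variables at the end yields $I(\tc[k]G;x) = I(G;(1+x)^k - 1)$ as a polynomial identity.

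The induction is cleanest stated multivariately: I would show by induction on $k\ge 1$ that $I(\tc[k] G;\gv x) = I(G;\gv y)$ with $y_a = (1+x_a)^k - 1$ for every $a\in V$, the base case $k=1$ being trivial since $\tc[1]G = G$. For the step, note $\tc[k+1]G$ is $\tc[k]G$ with one extra clone $a_{k+1}$ of weight $x_a$ added for each $a$; applying Lemma~\ref{lem:clone} once per vertex (the clones are mutually non-adjacent and share neighborhoods with their originals, so the hypotheses of the lemma are met) merges $a_{k+1}$ into the image of $a$, updating the weight from $(1+x_a)^k - 1$ to $z$ with $1+z = \big((1+x_a)^k - 1 + 1\big)(1 + x_a) = (1+x_a)^{k+1}$. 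Then specialize all $x_a := x$.

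The main obstacle is not any calculation — it is the bookkeeping to justify that the hypotheses of Lemma~\ref{lem:clone} genuinely hold at each stage, i.e.\ that after some clones have been added and merged, the remaining fresh clone of a vertex $a$ still has exactly the same neighbor set as the (current image of the) original $a$, and is not adjacent to it. This follows directly from the definition of $G_S$ for $S = \{0,\ldots,0\}$: the only edges of $G_S$ are those connecting $u(|S|)$ to $v(|S|)$ for each original edge $uv$, so all $|S|$ clones of $a$ have identical neighborhoods and no edges among themselves, and Lemma~\ref{lem:clone} applies verbatim; the merges for different original vertices are independent of one another. Once this is spelled out, the weight recursion $1 + y_a = (1+x_a)^k$ is immediate and the theorem follows.
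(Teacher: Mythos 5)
Your proposal is correct and takes essentially the same route as the paper, which proves Theorem~\ref{thm:cloning} simply by noting that it follows from repeated application of Lemma~\ref{lem:clone}; you have merely spelled out the induction and the bookkeeping that the paper leaves implicit.
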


Let us now use Lemma~\ref{lem:deg1red} to derive a formula that
describes how a path, attached to one vertex, influences the
independent set polynomial. Basically, we derive an explicit formula
from recursive application of \eqref{eq:deg1red} (cf.\ the proof of
the formula for the interlace polynomial of a path by Arratia et
al.\ \cite[Proposition 14]{arratia_two_var_interl}).

\begin{thm}
\label{thm:pathreduction}
Let $G=(V,E)$ be a graph and $a_0\in V$ a vertex. For a positive
integer $k$, let $\tr[k] G$ denote the graph $G$ with a path of length
$k$ added at $a_0$, i.e.\ $\tr[k] G = (V\uni \{a_1, \ldots, a_k\},
E\uni \{a_0a_1, a_1a_2, \ldots, a_{k-1}a_k\})$ with $a_1, \ldots, a_k$
being new vertices. Let $\gv x$ be a vertex labeling of $\tr[k] G$
with variables. Then the following polynomial equation holds:
\begin{equation}
  \label{eq:pathreduction}
  I(\tr[k] G; \gv x)=C_k I(G; \gv y),
\end{equation}
where $y_v=x_v$ for all $v\in V\setminus\{a_0, \ldots, a_k\}$,
$y_{a_0}=B_k/C_k$, and $B_0=x_k$, $C_0=1$ and, for $0\leq i<k$,
\begin{equation}
\label{eq:BiCi_rek}
\begin{pmatrix}
  B_{i+1} \\
  C_{i+1}
\end{pmatrix} =
M(x_{k-i-1})
\begin{pmatrix}
  B_i \\ C_i
\end{pmatrix},
\end{equation}
\begin{equation}
  \label{eq:Mx}
  M(x)= \begin{pmatrix}
  0 & x \\
  1 & 1
\end{pmatrix}.
\end{equation}

Let now $x\in \R$ be nondegenerate for path reduction, $\lambda_1,
\lambda_2$ be as in Definition~\ref{def:lambda}, $x_a=x$ for all $a\in
\{a_0, \ldots, a_k\}$ and $\lo^{k+2}\neq\lt^{k+2}$. Then
\eqref{eq:pathreduction} holds with $B_k$ and $C_k$ defined as in
\eqref{eq:Bk} and \eqref{eq:Ck}.
\end{thm}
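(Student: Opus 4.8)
The plan is to prove \eqref{eq:pathreduction} by induction on $k$, peeling off one leaf at a time with Lemma~\ref{lem:deg1red}, and then to solve the resulting linear recurrence in closed form. For the inductive step, observe that in $\tr[k] G$ the vertex $a_k$ is a leaf whose only neighbor is $a_{k-1}$. Applying Lemma~\ref{lem:deg1red} with $b=a_k$ and $a=a_{k-1}$ contracts $a_k$ into $a_{k-1}$, producing a factor $(1+x_{a_k})$ and replacing the weight at $a_{k-1}$ by $x_{a_{k-1}}/(1+x_{a_k})$; the remaining graph is $\tr[k-1] G$ but with a modified weight on the new free end $a_{k-1}$. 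The technical nuisance is that the weight on the end vertex is no longer $x_{a_{k-1}}$, so a naive induction hypothesis does not literally apply. The clean fix is to track the state of the end vertex as a projective pair $(B_i : C_i)$: after $i$ contractions the polynomial equals $C_i \cdot I\!\left(\tr[k-i] G; \gv x \text{ with } x_{a_{k-i}} \text{ replaced by } B_i/C_i\right)$, where $(B_0,C_0)=(x_{a_k},1)$ is the base case (no contraction done). One contraction step sends the end weight $z$ at $a_{k-i}$, which carries its \emph{own} polynomial weight $x_{a_{k-i}}$ plus the accumulated $B_i/C_i$ from the path — wait, more precisely: $a_{k-i}$ has weight $x_{a_{k-i-1}}$-ish — so I must be careful that the intrinsic weight $x_{a_{k-i-1}}$ at the next end is combined with the contraction factor coming from $z=B_i/C_i$. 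Carrying out Lemma~\ref{lem:deg1red} once more with $b$ the current end vertex of weight $B_i/C_i$ and $a$ its neighbor $a_{k-i-1}$ of intrinsic weight $x_{a_{k-i-1}}$ gives the new factor $C_{i+1}=B_i+C_i$ (from $1+B_i/C_i$, cleared of denominators) and the new end weight $B_{i+1}/C_{i+1}$ with $B_{i+1}=x_{a_{k-i-1}}\,C_i$; this is exactly the matrix recurrence \eqref{eq:BiCi_rek}--\eqref{eq:Mx}. After $k$ steps the path is gone, leaving $C_k\, I(G;\gv y)$ with $y_{a_0}=B_k/C_k$, which is \eqref{eq:pathreduction}.

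The main obstacle, and the step deserving the most care, is bookkeeping the denominators: Lemma~\ref{lem:deg1red} is a polynomial identity only after clearing the $(1+x_b)$ denominator, so I should phrase the induction as a polynomial identity in the variables, writing everything over the common state $(B_i,C_i)$ and verifying that $C_i$ is, at each stage, the correct polynomial to multiply through by. A subtle point is that $C_i$ could in principle vanish under a specialization, which is why the theorem is stated as a polynomial equation (valid generically) rather than a numerical one; the numerical specialization is handled separately below.

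For the closed-form part, specialize $x_a=x$ for every path vertex, so the recurrence becomes $\binom{B_{i+1}}{C_{i+1}}=M(x)\binom{B_i}{C_i}$ with the \emph{constant} matrix $M(x)=\left(\begin{smallmatrix}0 & x\\ 1 & 1\end{smallmatrix}\right)$. Its characteristic polynomial is $\lambda^2-\lambda-x$, whose roots are $\lambda_1,\lambda_2$ from Definition~\ref{def:lambda}; nondegeneracy for path reduction ($x>-\tfrac14$, $x\neq0$) guarantees $\lambda_1\neq\lambda_2$ and both nonzero, so $M(x)$ is diagonalizable and $M(x)^{k}$ has entries that are $\Q$-linear combinations of $\lambda_1^{k}$ and $\lambda_2^{k}$. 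Diagonalizing (or equivalently solving the scalar recurrence $C_{i}$ satisfies, $C_{i+1}=C_i+xC_{i-1}$, with the appropriate initial conditions $C_0=1$, $C_1=1$ and $B_0=x$, $B_1=x$) and matching initial conditions yields precisely the formulas $B_k=\frac{x}{\lambda_2-\lambda_1}(-\lambda_1^{k+1}+\lambda_2^{k+1})$ and $C_k=\frac{1}{\lambda_2-\lambda_1}(-\lambda_1^{k+2}+\lambda_2^{k+2})$ of \eqref{eq:Bk}--\eqref{eq:Ck}; the index shift $k+2$ versus $k$ comes from starting the recurrence at $C_0=1$, which is $\frac{-\lambda_1^2+\lambda_2^2}{\lambda_2-\lambda_1}=\lambda_1+\lambda_2=1$, consistent with Vieta. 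Finally, the hypothesis $\lambda_1^{k+2}\neq\lambda_2^{k+2}$ is exactly what makes $C_k\neq0$, so that $B_k/C_k$ is well-defined and \eqref{eq:pathreduction} holds as a numerical identity at this $x$, completing the proof.
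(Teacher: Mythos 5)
Your approach is essentially the paper's: run an induction on the number of leaf contractions $i$, maintaining the invariant
\begin{equation*}
I(\tr[k]G;x_0,\ldots,x_k)=C_i\,I\!\left(\tr[k-i]G;x_0,\ldots,x_{k-i-1},\tfrac{B_i}{C_i}\right),
\end{equation*}
with each contraction via Lemma~\ref{lem:deg1red} producing exactly the update $C_{i+1}=B_i+C_i$, $B_{i+1}=x_{k-i-1}C_i$ of \eqref{eq:BiCi_rek}; then diagonalize $M(x)$ to get the closed form. That is the same decomposition and the same closed-form mechanism the paper uses (the paper writes out $M(x)=SDS^{-1}$ explicitly and applies $SD^kS^{-1}$ to $(x,1)^T$).

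One concrete slip in your parenthetical alternative derivation: the initial conditions for the scalar recurrence $C_{i+1}=C_i+xC_{i-1}$ are $C_0=1$ and $C_1=B_0+C_0=1+x$, not $C_1=1$. Using $C_1=1$ would produce $C_k=(\lambda_2^{k+1}-\lambda_1^{k+1})/(\lambda_2-\lambda_1)$, off by one in the index from the correct \eqref{eq:Ck}; with $C_1=1+x=1-\lambda_1\lambda_2$ one gets the stated $(\lambda_2^{k+2}-\lambda_1^{k+2})/(\lambda_2-\lambda_1)$. Since your primary route is the diagonalization of $M(x)$, this does not undermine the proof, but the initial-condition claim as written is incorrect and should be fixed.
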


\begin{proof}
Let us write $I(\tr[k] G; x_0, \ldots, x_k)$ for $I(\tr[k] G; \gv
x)$ where $x_{a_j}=x_j$, $0\leq j\leq k$. Let us argue that we defined
the $B_i$, $C_i$ such that, for $0\leq i\leq k$,
\begin{equation}
\label{eq:BiCi_sinn}
I(\tr[k] G; x_0, \ldots, x_k) = C_i I\left(\tr[k-i]G; x_0, \ldots,
x_{k-i-1},\frac {B_i} {C_i}\right).
\end{equation}
This is trivial for $i=0$. As we have
\begin{align*}
  1+\frac {B_i}{C_i} \left(1+x_{k-i-1}u^2\right) = & \frac {C_i + B_i\left(1+x_{k-i-1}u^2\right)}{C_i}
  = \frac {C_{i+1}}{C_i},
\end{align*}
we see that \eqref{eq:BiCi_sinn} holds for all $1\leq i\leq k$: Use
Lemma~\ref{lem:deg1red} in the following inductive step:
\begin{align*}
  I(\tr[k] G; x_0, \ldots, x_k) = &C_i I(\tr[k-i]G; x_0, \ldots, x_{k-i-1},\frac {B_i} {C_i}) \\
  =& C_i \frac{C_{i+1}}{C_i} I\left(\tr[k-i-1]G; x_0, \ldots, x_{k-i-2},
  \frac {x_{k-i-1}} {\frac{C_{i+1}}{C_i}}\right)\\
  =& C_{i+1}I\left(\tr[k-i-1]G; x_0, \ldots, x_{k-i-2},
    \frac {B_{i+1}} {C_{i+1}}\right).
\end{align*}
Thus, \eqref{eq:pathreduction} holds as a polynomial equality.

Let us now consider $x$ as a real number, $x>-1/4$. Matrix $M(x)$ in
\eqref{eq:Mx} can be diagonalized as $M(x)=SDS^{-1}$ with
\[S=
\begin{pmatrix}
  x & x \\ \lambda_1 & \lambda_2
\end{pmatrix},\quad
D =
\begin{pmatrix}
  \lambda_1 & 0 \\ 0 & \lambda_2
\end{pmatrix},\quad
S^{-1}=\frac 1 {x(\lambda_2-\lambda_1)} \begin{pmatrix}
  \lambda_2 & -x \\ -\lambda_1 & x
\end{pmatrix},\]
$\lambda_{1,2}$ as in \eqref{eq:lolt}. Now we substitute variable
$x_v$ by real number $x$ for all $v\in \{a_0, \ldots, a_k\}$ and
$M(x)$ by $SDS^{-1}$. This yields the statement of the theorem.
\end{proof}

Now we see that Theorem~\ref{thm:S_cloning} can be proved by repeated
application of Lemma~\ref{lem:clone} and
Theorem~\ref{thm:pathreduction}.

%%%%%%%%%%%%%%%%%%%%%%%%%%%%%%%%%%%%%%%%%%%%%%%%%%%%%%%%%%%%%%%%%%%%%%
\section{Interpolation via $S$-clones}
\label{sec:interpolation}

In this section, we give a reduction from evaluation of the
independent set polynomial at a fixed point $x\in \Q\setminus \{0\}$
to computation of the coefficients of the independent set
polynomial. Thus, given a graph $G$ with $n$ vertices, we would like
to interpolate $I(G;X)$, where $X$ is a variable. The degree of this
polynomial is at most $n$, thus it is sufficient to know $I(G;x_i)$
for $n+1$ different values $x_i$. Our approach is to modify $G$ in
$n+1$ different ways to obtain $n+1$ different graphs $G_0$, \ldots,
$G_n$. Then we evaluate $I(G_0;x)$, $I(G_1; x)$, \ldots,
$I(G_n;x)$. We will prove that $I(G_i;x)=p_i I(G;x_i)$ for $n+1$ easy
to compute $x_i$ and $p_i$, where $x_i\neq x_j$ for all $i\neq
j$. This will enable us to interpolate $I(G;X)$.

If the modified graphs $G_i$ are $c$ times larger than $G$, we lose a
factor of $c$ in the reduction, i.e.\ a $2^n$ running time lower bound
for evaluating the graph polynomial at $x$ implies only a $2^{n/c}$
lower bound for evaluation at the interpolated points. Thus, we can
not afford simple cloning (i.e.\ constructing $\tc[2]G$, $\tc[3]G$,
\ldots\ to use Theorem~\ref{thm:cloning}): To get enough points for
interpolation, we would have to evaluate the graph polynomials on
graphs of sizes $2n, 3n, \ldots, n^2$. To overcome this problem, we
transfer a technique of Dell, Husfeldt, and Wahlén
\cite{dell_husfeldt_wahlen}, which they developed for the Tutte
polynomial to establish a similar reduction: We clone every vertex
$O(\log n)$ times and use $n$ different ways to add paths of different
(but at most $O(\log^2 n)$) length at the different
clones. Eventually, this will lead to the following result:

\begin{thm}
\label{thm:expreduction}
  Let $x_0\in \Q$ such that $x_0$ is nondegenerate for path reduction
  and the independent set polynomial $I$ of every $n$-vertex graph $G$
  can be evaluated at $x_0$ in time $2^{o(n/\log^3n)}$.

  Then, for every $n$-vertex graph $G$, the $X$-coefficients of the
  independent set polynomial $I(G;X)$ can be computed in time
  $2^{o(n)}$. In particular, the independent set polynomial $I(G;x_1)$
  can be evaluated in this time for every $x_1\in \Q$.
\end{thm}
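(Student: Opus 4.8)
The plan is to realize the $n+1$ interpolation points promised in the discussion preceding Theorem~\ref{thm:expreduction} by choosing, for each $i\in\{0,1,\dots,n\}$, a suitable multiset $S_i$ of nonnegative integers and forming the $S_i$-clone $G_{S_i}$. By Theorem~\ref{thm:S_cloning}, provided $S_i$ is compatible with $x_0$, we have $I(G_{S_i};x_0)=p_i\,I(G;x(S_i))$ where $p_i=\big(\prod_{s\in S_i}C_s\big)^{|V|}$ and $x(S_i)$ is given by \eqref{eq:xS}. So the task reduces to two subtasks: (1) pick the $S_i$ so that the $n+1$ resulting values $x_i:=x(S_i)$ are pairwise distinct, each $S_i$ is compatible with $x_0$, and $|S_i|$ together with the path lengths in $S_i$ is $O(\log n)$ and $O(\log^2 n)$ respectively, so that $|V_{S_i}| = O(n\log^3 n)$; and (2) run the assumed $2^{o(n/\log^3 n)}$-time evaluation algorithm on each $G_{S_i}$, which on a graph of $N:=O(n\log^3 n)$ vertices costs $2^{o(N/\log^3 N)}=2^{o(n)}$, do this $n+1$ times, recover $I(G;x_i)$ by dividing out the explicitly computable nonzero $p_i$, and finally Lagrange-interpolate the degree-$\le n$ polynomial $I(G;X)$ from the pairs $(x_i,I(G;x_i))$. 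Evaluating the interpolated polynomial at any rational $x_1$ then gives $I(G;x_1)$ in time $2^{o(n)}$, and in particular its coefficients.

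First I would set up the clone families following Dell--Husfeldt--Wahlén. Fix $x_0\in\Q$ nondegenerate for path reduction, with associated roots $\lambda_1,\lambda_2$ from Definition~\ref{def:lambda}. The key scalar quantity is the per-path multiplier $r_s:=1+B_s/C_s = C_{s+1}/C_s$ (using the identity in the proof of Theorem~\ref{thm:pathreduction}), so that $x(S)+1=\prod_{s\in S}r_s$. I would first check that $r_s$ is not eventually constant and that $r_s\to$ a limit $r_\infty$ as $s\to\infty$ (namely $r_\infty=\lambda_2$ when $|\lambda_2|>|\lambda_1|$, since $B_s/C_s\to x_0/\lambda_2 = \lambda_2-1$): then clones built from $k$ copies of a single large-ish path length $s$ give $x(S)+1\approx r_\infty^{\,k}$, a geometric progression, while mixing in a second path length $t$ perturbs this. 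Concretely, the strategy is to take $S_i$ to consist of $a$ copies of some ``base'' length $\ell_0$ and $b_i$ copies of a slightly different length $\ell_1$, with $a+b_i=O(\log n)$ fixed and $\ell_0,\ell_1=O(\log^2 n)$, chosen so that $x(S_i)+1 = r_{\ell_0}^{\,a-b_i} r_{\ell_1}^{\,b_i}$ runs through $n+1$ distinct values as $b_i$ ranges over $\{0,1,\dots,n\}$. Distinctness is equivalent to $r_{\ell_0}\neq r_{\ell_1}$ (together with $r_{\ell_0},r_{\ell_1}\neq 0$), plus the absence of accidental coincidences $r_{\ell_0}^{\,a-b_i}r_{\ell_1}^{\,b_i}=r_{\ell_0}^{\,a-b_j}r_{\ell_1}^{\,b_j}$, i.e.\ $(r_{\ell_1}/r_{\ell_0})^{b_i-b_j}=1$; since $r_{\ell_1}/r_{\ell_0}$ is a fixed rational (or algebraic) number not a root of unity — it is real and, for $\ell_0\neq\ell_1$ large, genuinely $\neq\pm1$ — this can only fail for $b_i=b_j$. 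Compatibility ($\lambda_1^{s+2}\neq\lambda_2^{s+2}$ for $s\in S_i$) holds for all but finitely many $s$ (it can fail only if $\lambda_1/\lambda_2$ is a root of unity, which for real $\lambda_1\neq\lambda_2$ means $\lambda_1=-\lambda_2$, i.e.\ $x_0=1/4$-type degeneracies already handled, or forces a single specific exponent), so choosing $\ell_0,\ell_1$ large enough avoids it; and the denominators $C_s$ appearing in $p_i$ are then nonzero, so the division step is legitimate and $p_i$ is computable in polynomial time from $x_0,\ell_0,\ell_1,a,b_i,|V|$.

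I would then assemble the pieces: for each $i$, build $G_{S_i}$ (polynomial time, $N=O(n\log^3 n)$ vertices), call the evaluation oracle to get $I(G_{S_i};x_0)$ in time $2^{o(N/\log^3 N)}=2^{o(n)}$, divide by $p_i\neq0$ to obtain $I(G;x_i)$, and finally interpolate. Since $\deg_X I(G;X)\le n$, the $n+1$ pairs $(x_i,I(G;x_i))$ with distinct $x_i$ determine $I(G;X)$ uniquely, and Lagrange interpolation over $\Q$ costs time polynomial in $n$ and the bit-sizes involved — which are themselves $2^{o(n)}$ at worst, since each $I(G_{S_i};x_0)$ has bit-length at most the oracle's output length. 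Hence the coefficients are obtained in total time $2^{o(n)}$, and evaluating at a further rational $x_1$ is then immediate. The main obstacle I anticipate is subtask (1): making the two path lengths $\ell_0,\ell_1$ simultaneously $O(\log^2 n)$, compatible with $x_0$, and such that $r_{\ell_0}\neq r_{\ell_1}$ with $r_{\ell_1}/r_{\ell_0}$ not a root of unity — this requires a careful quantitative analysis of how fast $r_s=C_{s+1}/C_s$ converges and how its successive values separate (a $1/\mathrm{poly}$ separation over an $O(\log^2 n)$ window suffices, and the explicit closed forms \eqref{eq:Bk}, \eqref{eq:Ck} in terms of $\lambda_1^s,\lambda_2^s$ make this tractable), and it is also where one must split into cases according to whether $x_0>0$ (then $\lambda_1,\lambda_2$ real of opposite sign, $|\lambda_2|>|\lambda_1|$, clean geometric behavior) or $-1/4<x_0<0$ (then $0<\lambda_1<\lambda_2<1$, still real, and the ratios behave, just with more bookkeeping). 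Everything downstream — the size bound on $G_{S_i}$, the oracle time accounting, the nonvanishing of $p_i$, and the interpolation — is routine once the points $x_i$ are in hand.
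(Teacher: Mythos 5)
Your overall skeleton matches the paper's: build $n+1$ clone graphs $G_{S_i}$, invoke Theorem~\ref{thm:S_cloning} to relate $I(G_{S_i};x_0)$ to $I(G;x(S_i))$, check the sizes are $O(n\log^3 n)$, call the oracle, divide by the computable $p_i$, and Lagrange-interpolate. The oracle-time accounting ($2^{o(N/\log^3 N)}=2^{o(n)}$ for $N=O(n\log^3 n)$) and the size bound from Remark~\ref{rem:S_clone_size} are correct. However, the construction of the multisets $S_i$ — the genuine technical core — does not work as you describe it.

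You propose to use only two path lengths $\ell_0,\ell_1$, taking $S_i$ to have $a-b_i$ copies of $\ell_0$ and $b_i$ copies of $\ell_1$, with $|S_i|=a=O(\log n)$. But then $b_i$ can only range over $\{0,1,\dots,a\}$, giving at most $O(\log n)$ distinct multisets, not $n+1$. Conversely, if you let $b_i$ range over $\{0,\dots,n\}$ as you write, then $|S_i|\geq n$ for the larger $b_i$ and the clone $G_{S_i}$ has $\Omega(n^2)$ vertices, destroying the $O(n\log^3 n)$ bound and hence the $2^{o(n)}$ running time. This is not a bookkeeping issue but a counting obstruction: with multisets of size $O(\log n)$ drawn from only two path lengths you can form at most $O(\log n)$ multisets, which is exponentially fewer than the $n+1$ interpolation points needed. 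The paper's Lemma~\ref{lem:niceSi} circumvents this by using $\Theta(\log n)$ \emph{distinct} path-length ``slots,'' one per bit position $j$ of the index $i$, with the $j$-th slot taking value $s_0+\Delta(2j)$ or $s_0+\Delta(2j+1)$ according to the $j$-th bit $b_j^{(i)}$ of $i$. This encodes all of $\{0,\dots,n\}$ into sets of size $\lfloor\log n\rfloor+1$, with path lengths $O(\log^2 n)$ since $\Delta=\Theta(\log n)$. Distinctness of the resulting $x(S_i)$ is then proved via Lemma~\ref{lem:Xdiff} together with a quantitative estimate showing the dominant term beats the sum of all others; your ``not a root of unity'' observation is a special case of the mechanism but does not extend to the many-slot situation where accidental cancellations across different positions must be excluded. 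So the idea you label as ``subtask (1)'' is exactly where your proposal breaks, and the fix requires the binary-encoding trick from Dell--Husfeldt--Wahlén rather than a two-length geometric progression.
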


Using this theorem, we can prove our main result.
\begin{proof}[Proof of Theorem~\ref{thm:WISexponential}]
  For $x>-1/4$, the corollary follows from
  Theorem~\ref{thm:expreduction} and Theorem~\ref{thm:IS}.

  Let us now consider $x<-2$. Then we have $|1+x|>1$, which implies
  $(1+x)^2-1>0$. On input graph $G=(V,E)$, we have $I(\tc[2]
  G;x)=I(G;(1+x)^{2}-1)$ by Theorem~\ref{thm:cloning}. Graph $\tc[2]
  G$ has $2|V|$ vertices. This establishes a reduction from $\gpe I
  {(1+x)^{2}-1}$ to $\gpe I x$, where the instance size increases only
  by a constant factor. As $(1+x)^2-1>0$, we have reduced from an
  evaluation point where we have already proved the (conditional)
  lower bound of the lemma. Thus, the same bound, which is immune to
  constant factors in the input size, holds for $\gpe I x$.

  Let us consider $x\in (-2,0)\setminus \{-1\}$. We have $|x+1|<1$ and
  $|x+1|\neq 0$. In a similar way as we just used a $2$-clone, we can
  use the comb reduction \cite[Section 3.2]{interlace_hard}: Let $k$
  be a positive even integer such that $k>\frac
  {\log(-2x)}{\log|x+1|}$. Then we have $y:=\frac x {(1+x)^k}<-2$. On
  input graph $G=(V,E)$, we can construct $G_k$ as in the comb
  identity for the interlace polynomial \cite[Theorem
    3.5]{interlace_hard}, and we have $I(G_k;x)=(1+x)^kI(G;y)$. As $k$
  does not depend on $n=|V|$, $|V(G_k)|=O(|V|)$. Thus, we have reduced
  from $y<-2$, an evaluation point where we have already proved the
  lower bound, to evaluation at $x$.

  To handle $x=-2$ and $x=-1$, add cycles \cite[Theorem 3.7 and
    Proposition 3.8]{interlace_hard}.
\end{proof}

The rest of this section is devoted to the proof of
Theorem~\ref{thm:expreduction}, which is quite technical. The general
idea is similar to Dell et al.\ \cite[Lemma 4, Theorem
  3(ii)]{dell_husfeldt_wahlen}.

\begin{defi}
  Let $S$ be a set of numbers. Then we define $\Ss S=\sum_{s\in S} s$.
\end{defi}

\begin{rem}
  \label{rem:S_clone_size}
  The $S$-clone $G_S$ of a graph $G=(V,E)$ has $|V|(\Ss S + |S|)$
  vertices.
\end{rem}

\begin{lem}
\label{lem:niceSi}
  Assume that $x\in \R$ is nondegenerate for path reduction. Then
  there are sets $S_0, S_1, \ldots, S_n$ of positive integers,
  constructible in time $\poly(n)$, such that
  \begin{enumerate}
  \item $x(S_i)\neq x(S_j)$ for all $i\neq j$ and
    \label{lem:niceSi:separation}
  \item $\Ss {S_i} \in O(\log^{3} n)$ and $|S_i|\in O(\log n)$ for all
    $i$, $0\leq i \leq n$.
    \label{lem:niceSi:sum}
  \end{enumerate}
\end{lem}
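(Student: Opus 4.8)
The plan is to build each $S_i$ as a union of ``blocks'' of consecutive integers, so that $x(S_i)$ encodes $i$ in a way that makes the separation property~\eqref{lem:niceSi:separation} easy to read off, while keeping $\Ss{S_i}$ and $|S_i|$ small. First I would understand the map $s\mapsto 1+B_s/C_s$ appearing in~\eqref{eq:xS}. Using the closed forms~\eqref{eq:Bk}, \eqref{eq:Ck} and the fact that $\lo\lt=-x$, $\lo+\lt=1$, one computes
\[
1+\frac{B_s}{C_s}=\frac{C_s+B_s}{C_s}
=\frac{-\lo^{s+2}+\lt^{s+2}+x(-\lo^{s+1}+\lt^{s+1})}{-\lo^{s+2}+\lt^{s+2}}.
\]
Since $x=-\lo\lt$, the numerator simplifies: $x(-\lo^{s+1}+\lt^{s+1})=\lo\lt(\lo^{s+1}-\lt^{s+1})=\lo^{s+2}\lt-\lo\lt^{s+2}$, so the whole fraction becomes a ratio of the form $\dfrac{\lo^{s+1}(\lt-\lo)+\lt^{s+1}(\lt-\lo)\cdot(\cdots)}{\cdots}$; carrying this through, $1+B_s/C_s = \dfrac{\lo^{s+1}-\lt^{s+1}}{\lo^{s+2}-\lt^{s+2}}\cdot(\text{something})$, i.e.\ it is a M\"obius-type function of the single quantity $\rho^{s}$ where $\rho=\lo/\lt$ (or $\lt/\lo$). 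The key qualitative fact I want to extract is: the map $s\mapsto 1+B_s/C_s$ is \emph{injective} on nonnegative integers (using nondegeneracy, so $\rho\neq 0,1,-1$ — the case $|\rho|=1$, $\rho\neq1$ is exactly excluded by $x>-1/4$, $x\neq0$), and moreover $\log(1+B_s/C_s)$ is, up to a bounded additive correction, linear in $s$. Hence $\log(x(S)+1)=\sum_{s\in S}\log(1+B_s/C_s)$ behaves like a positive linear combination of the multiplicities, which is what gives injectivity of $S\mapsto x(S)$ on suitably structured families.

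Next I would specify the family. Fix $t=\lceil\log_2(n+1)\rceil=O(\log n)$. For $i\in\{0,\dots,n\}$ write $i$ in binary as $i=\sum_{j=0}^{t-1} b_j 2^j$ with $b_j\in\{0,1\}$. Pick an integer gap parameter $g=\Theta(\log n)$ (large enough, to be fixed below) and let the $j$-th ``slot'' be the block of integers $L_j=\{\,jg+1,\ jg+2,\ \dots,\ jg+g\,\}$; these slots are pairwise disjoint and all lie in $[1,tg]\subseteq[1,O(\log^2 n)]$. Define
\[
S_i \;=\; \bigcup_{j:\,b_j=1} L_j,
\]
a set (not just multiset) of positive integers. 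Then $|S_i|\le tg=O(\log^2 n)$ and $\Ss{S_i}\le (tg)^2=O(\log^4 n)$ — which is a bit worse than claimed. To hit $O(\log n)$ for $|S_i|$ and $O(\log^3 n)$ for $\Ss{S_i}$ I would instead let each block $L_j$ be a \emph{single} integer $r_j$ placed at exponentially growing heights $r_j=\Theta(j)$ is not enough for separation; the honest route, matching the paper's bound, is: take $|S_i|=O(\log n)$ by using $O(\log n)$ blocks each of $O(1)$ elements, with the blocks placed at heights $O(\log^2 n)$ so that $\Ss{S_i}=O(\log n)\cdot O(\log^2 n)=O(\log^3 n)$. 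Concretely, for each bit position $j$ put one carefully chosen integer $r_j$ with $r_j=\Theta(j\log n)$, so all $r_j\le O(\log^2 n)$, $|S_i|\le t=O(\log n)$, and $\Ss{S_i}\le t\cdot O(\log^2 n)=O(\log^3 n)$. Set $S_i=\{\,r_j : b_j=1\,\}$.

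It then remains to prove separation~\eqref{lem:niceSi:separation}: $x(S_i)=x(S_j)\Rightarrow i=j$. Equivalently, $\sum_{s\in S_i}f(s)=\sum_{s\in S_j}f(s)$ where $f(s)=\log(1+B_s/C_s)$; since $f$ is injective and (by the computation above) $f(s)=\alpha s+\eps(s)$ with $\alpha\neq 0$ a fixed nonzero constant and $|\eps(s)|$ bounded by a constant, a sum over a subset of the $r_j$ determines the subset provided the $r_j$ are \emph{super-increasing relative to the error scale}: if $r_{j}$ grows fast enough in $j$ that $|\alpha| r_{j} > \sum_{j'<j}(|\alpha| r_{j'} + 2C_\eps) + 2C_\eps$, then a greedy argument recovers the highest present bit, then induct. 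Choosing $r_j$ to roughly double (within the budget $r_j=O(\log^2 n)$, which accommodates $t=O(\log n)$ doublings) makes this work; this is where the constants in the $O(\log n)$ and $O(\log^3 n)$ bounds are paid. Polynomial-time constructibility is immediate since everything is an explicit function of the binary expansion of $i$.

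The main obstacle I anticipate is \textbf{controlling the error term $\eps(s)$ uniformly}, i.e.\ establishing that $f(s)=\log(1+B_s/C_s)$ really is $\alpha s + O(1)$ with a \emph{clean} leading coefficient and a \emph{bounded} remainder, including the awkward sub-cases of nondegeneracy: $\lo,\lt$ complex conjugate on the unit circle (when $-1/4<x<0$), $\lo,\lt$ real of opposite sign (when $x>0$), and the borderline $x=-1/4$ which is excluded but nearby. In the complex-modulus-one case $|\rho|=1$ with $\rho$ not a root of unity, $\rho^{s}$ is equidistributed on the circle and $f(s)$ does \emph{not} look linear — so for those $x$ I would instead argue separation directly: $x(S_i)=x(S_j)$ forces an equality of products $\prod(1+B_s/C_s)$ which, cleared of denominators, becomes a nontrivial polynomial identity in $\rho$ that can only hold if the multisets coincide, using that $\rho$ is not a root of unity (guaranteed by $x\in\Q$, $x\neq0$, $x>-1/4$, via a Kronecker/algebraic-number argument, or more simply by noting $\lo,\lt$ are the roots of $\lambda^2-\lambda-x$ over $\Q$ and $\lo^{m}=\lt^{m}$ for some $m$ would make $(\lo/\lt)^m=1$, contradicting compatibility/irrationality). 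Pinning down this case-split cleanly — perhaps by phrasing the whole argument as ``$S\mapsto x(S)+1$ is injective because $\prod_{s\in S}\frac{\lo^{s+1}-\lt^{s+1}}{\lo^{s+2}-\lt^{s+2}}$ determines $S$'' and proving that statement once, uniformly — is the crux of the proof.
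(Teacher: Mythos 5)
There is a genuine gap, centered on a wrong asymptotic. You claim that $f(s)=\log(1+B_s/C_s)$ is of the form $\alpha s+\eps(s)$ with $\alpha\neq 0$ and $|\eps(s)|$ bounded, and you build your super-increasing sequence $r_j$ around this. But using $\lambda^2=\lambda+x$ one gets the clean identity
\[
1+\frac{B_s}{C_s}=\frac{C_s+B_s}{C_s}=\frac{\lt^{s+3}-\lo^{s+3}}{\lt^{s+2}-\lo^{s+2}}=\frac{C_{s+1}}{C_s},
\]
and since $|\lo|>|\lt|$ in the nondegenerate range (see below), this ratio converges to $\lo$ geometrically fast. Hence $f(s)=\log\lo+O(|\lt/\lo|^s)$: it is \emph{essentially constant} in $s$, not linear. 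Consequently, $\log(x(S)+1)\approx|S|\log\lo$ with only exponentially small corrections, and the ``greedy recovery of the highest present bit'' argument you describe does not go through as stated; the quantity that has to dominate is the \emph{exponentially tiny} correction at the smallest element of the symmetric difference, not the ``linear in $s$'' main term. Moreover, your $S_i$ have variable cardinality, which introduces an uncontrolled $(|S_i|-|S_j|)\log\lo$ term that your analysis never addresses. The paper sidesteps both problems: all $S_i$ have the same cardinality $\lfloor\log n\rfloor+1$ (via the $s_0+\Delta(2j+b_j^{(i)})$ encoding), and separation is proved by expanding $x(S_i)-x(S_j)$ exactly (Lemma~\ref{lem:Xdiff}) and showing that, because every two elements of $S_i\cup S_j$ are at distance $\geq\Delta=\Theta(\log n)$, one term dominates the sum of the other $\leq 2n^2$ terms.

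A second, smaller issue: your case analysis about $\lo,\lt$ being complex of modulus one is vacuous here. ``Nondegenerate for path reduction'' means $x>-1/4$ and $x\neq 0$, so $\lo=\tfrac12+\sqrt{\tfrac14+x}$ and $\lt=\tfrac12-\sqrt{\tfrac14+x}$ are always real and distinct with $\lo>|\lt|$; the Kronecker/root-of-unity discussion is not needed. What you did get right, in spirit, is the overall shape of the construction (binary encoding of $i$ into $O(\log n)$ chosen integers, each of magnitude $O(\log^2 n)$, giving $\Ss{S_i}=O(\log^3n)$), and the observation that separation has to be forced by placing the chosen integers far enough apart; but the quantitative mechanism is the geometric decay of $f(s)-\log\lo$, not a linear growth in $s$, and a correct proof has to be built around that.
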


\begin{proof}
  We use the notation from Theorem~\ref{thm:pathreduction} and assume
  $\lo>\lt$.

  As $\left| \frac \lo \lt \right|^k\to \infty$ for $k\to \infty$,
  there is a positive integer $s_0$ such that
  \[ \left (\frac \lo \lt\right)^{s} \not \in \left\{ \left(\frac \lt \lo\right)^2, \frac {\lt(x+\lt)}{\lo(x+\lo)}
  \right\} \quad \forall s\geq s_0.\]

  Thus, for every $i$, $0\leq i\leq n$, the following set fulfills the
  precondition on $S$ and $T$ in Lemma~\ref{lem:Xdiff}:
  \[ S_i=\{s_0+\Delta (2j+b_j^{(i)})\ |\ 0\leq j\leq \lfloor \log n\rfloor\},\]
  where $\Delta$ is a positive integer defined later, $\Delta \in
  \Theta(\log n)$, and $[b_{\lfloor \log n\rfloor}^{(i)}, \ldots,
    b_1^{(i)}, b_0^{(i)}]$ is the binary representation of $i$. Note
  that this construction is very similar to Dell et al.\ \cite[Lemma
    4]{dell_husfeldt_wahlen}. It is important that the elements in
  these sets have distance at least $\Delta$ from each other. The sets
  are $\poly(n)$ time constructible as $s_0$ does not depend on
  $n$. We have $\Ss {S_i}\leq (1+\log n) (s_0 +(1+2\log n)\Delta)$ and
  obviously $|S_i|\in O(\log n)$ for all $i$. Thus, the second
  statement of the lemma holds.

  To prove the first statement, we use Lemma~\ref{lem:Xdiff}. Let
  $1\leq i<j\leq n$ and $S=S_i\setminus S_j$, $T=S_j\setminus
  S_i$. Let $s_1$ be the smallest number in $S\uni T$ and $A_1=(S\uni
  T)\setminus \{s_1\}$.  For $f$ as in Lemma~\ref{lem:Xdiff}, let us
  prove that $|f(A_1)|>\sum_{\substack{A\subs S\uni T\\ A\neq A_1}}
  |f(A)|$, which yields the statement of Lemma~\ref{lem:niceSi}.

  Assume without loss of generality that $s_1\in S$. As $x$ is
  nondegenerate, $C_1:=\min\{1,|\lo|, |\lt|, |x+\lo|, |x|,
  |\lo-\lt|\}$ is a nonzero constant. As $|S|=|T|$, \[
  \begin{split}
    D(S,T,A_1) &=\lo^{|T|} (x+\lo)^{|S|-1}(x+\lt) - \lo^{|S|-1}\lt(x+\lo)^{|T|} \\
    &=\lo^{|S|-1}(x+\lo)^{|S|-1}\big(\lo(x+\lt)-\lt(x+\lo)\big)\\
    &=\lo^{|S|-1}(x+\lo)^{|S|-1}x(\lo-\lt),
  \end{split} \]
  and we have
  \begin{equation}
    \label{eq:highestterm}
    |f(A_1)|\geq |\lo|^{\Ss {S\uni T}-s_1}|\lt|^{s_1}C_1^{7|S|}.
  \end{equation}
  
  If $A=\emptyset$ or $A=S\uni T$, we have $D(A)=0$, which implies
  $f(A)=0$. For every $A\subs S\uni T$, $A\neq \emptyset$, $A\neq
  S\uni T$, $A\neq A_1$, we have $\Ss A \leq \Ss{S\uni
    T}-s_1-\Delta$. Thus,
  \begin{equation}
    \label{eq:otherterms}
    |f(A)|\leq |\lo|^{\Ss{S\uni T}-s_1-\Delta}|\lt|^{s_1+\Delta}C_2^{7|S|},
  \end{equation}
  where $C_2=2\max\{1,|\lo|, |\lt|,|x+\lo|,|x+\lt|\}$. There are less
  than $2^{\lfloor \log n\rfloor + 1}\leq 2n^2$ such $A$. Combining
  this with \eqref{eq:highestterm} and \eqref{eq:otherterms}, it
  follows that we have proved the lemma if we ensure
  \[ \Big|\frac{\lo}{\lt}\Big|^{\Delta}>\Big(\frac{C_2}{C_1}\Big)^{7|S|}2n^2.\]
  This holds if
  \[ \Delta >7\Big((\log n + 1)\log \frac{C_2}{C_1} + 2\log n + 1\Big)/\log \frac{\lo}{|\lt|}.
  \]
  As $C_1$, $C_2$, $\lo$, $\lt$ do not depend on $n$, we can choose
  $\Delta\in \Theta(\log n)$.
\end{proof}

\begin{lem}
\label{lem:Xdiff}
  Let $S$ and $T$ be two sets of positive integers. Let also $x\in \R$
  be nondegenerate for path reduction and, for all $s\in S\uni T$,
  \begin{align}
    \left(\frac \lo \lt \right)^{s+2} &\neq 1\quad \text{and} \\
    \left(\frac \lo \lt \right)^{s+1} &\neq \frac {x+\lt} {x+\lo},
    \label{eq:path_nonzero}
  \end{align}
  where $\lo, \lt$ are defined as in Theorem~\ref{thm:pathreduction}.
  Then we have $x(S)=x(T)$ iff
  \[\sum_{A\subs S\sd T} f(A) = 0,\]
where
\begin{eqnarray*}
  f(A)&=&\lo^{\Ss A}\lt^{\Ss{(S\sd T)\setminus A}}
  (-\lo)^{|A|} \lt^{|(S\sd T)\setminus A|} \cdot D(S\setminus T, T\setminus S, A),\\
  D(S,T,A) &=& c(S,T,A\cut S,A\cut T) - c(T,S,A\cut T, A\cut S),\\
  c(S,T,S_0, T_0)&=&\lo^{|T_0|}\lt^{|T\setminus T_0|}(x+\lo)^{|S_0|}(x+\lt)^{|S\setminus S_0|}.
\end{eqnarray*}
\end{lem}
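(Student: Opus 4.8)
The plan is to rewrite $x(S)$ as an explicit product, turn the equation $x(S)=x(T)$ into a polynomial identity in $\lo$ and $\lt$, and then expand both sides of that identity over subsets of $S\sd T$ so as to recognise the alternating sum $\sum_A f(A)$. Throughout I would set $P:=S\setminus T$ and $Q:=T\setminus S$, so that $S\sd T$ is the disjoint union of $P$ and $Q$.

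\emph{Step 1: a closed form for $x(S)$.} By \eqref{eq:xS}, $x(S)+1=\prod_{s\in S}(1+B_s/C_s)=\prod_{s\in S}(C_s+B_s)/C_s$. Substituting \eqref{eq:Bk} and \eqref{eq:Ck} and factoring yields
\[ C_s+B_s=\frac{\lt^{s+1}(x+\lt)-\lo^{s+1}(x+\lo)}{\lt-\lo},\qquad C_s=\frac{\lt^{s+2}-\lo^{s+2}}{\lt-\lo}, \]
so, writing $N_s:=\lt^{s+1}(x+\lt)-\lo^{s+1}(x+\lo)$ and $D_s:=\lt^{s+2}-\lo^{s+2}$, we have $x(S)+1=\prod_{s\in S}N_s/D_s$. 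Since $x$ is nondegenerate for path reduction, the roots $\lo,\lt$ of \eqref{eq:lambda} are real, distinct and nonzero, and $x+\lo=\lo^2$, $x+\lt=\lt^2$ are nonzero. The first hypothesis $(\lo/\lt)^{s+2}\neq1$ of the lemma is then exactly $D_s\neq0$, and the second $(\lo/\lt)^{s+1}\neq(x+\lt)/(x+\lo)$ is exactly $N_s\neq0$, for all $s\in S\uni T$; in particular all $C_s\neq0$, so $x(S)$ and $x(T)$ are well-defined.

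\emph{Steps 2 and 3: reduction to, and verification of, a polynomial identity.} Since the factors $N_s/D_s$ with $s\in S\cut T$ are nonzero, cancelling them from the two products $\prod_{s\in S}N_s/D_s$ and $\prod_{s\in T}N_s/D_s$ and clearing denominators shows that $x(S)=x(T)$ if and only if
\[ \prod_{s\in P}N_s\cdot\prod_{s\in Q}D_s \;=\; \prod_{s\in Q}N_s\cdot\prod_{s\in P}D_s ; \]
so it is enough to prove that the difference of the two sides equals $\sum_{A\subs S\sd T}f(A)$. For this, I would expand each product over subsets: in $\prod_{s\in P}N_s$ let $B\subs P$ record the vertices at which the summand $-\lo^{s+1}(x+\lo)$ of $N_s$ is chosen, and in $\prod_{s\in Q}D_s$ let $C\subs Q$ record where the summand $-\lo^{s+2}$ of $D_s$ is chosen. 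Reindexing the resulting double sum by $A:=B\uni C$ (so $B=A\cut P$, $C=A\cut Q$), and using $\Ss{A}=\Ss{A\cut P}+\Ss{A\cut Q}$, $|A|=|A\cut P|+|A\cut Q|$ and $(S\sd T)\setminus A=(P\setminus A)\duni(Q\setminus A)$, one can pull the monomial $(-\lo)^{|A|}\lo^{\Ss{A}}\lt^{|(S\sd T)\setminus A|}\lt^{\Ss{(S\sd T)\setminus A}}$ out of every term, and what remains in $\prod_{s\in P}N_s\prod_{s\in Q}D_s$ is precisely $c(P,Q,A\cut P,A\cut Q)$, while in $\prod_{s\in Q}N_s\prod_{s\in P}D_s$ it is $c(Q,P,A\cut Q,A\cut P)$. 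Subtracting and comparing with the definitions of $D$ and $f$ then gives
\[ \prod_{s\in P}N_s\prod_{s\in Q}D_s-\prod_{s\in Q}N_s\prod_{s\in P}D_s=\sum_{A\subs S\sd T}f(A), \]
which together with Step 2 proves the lemma.

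The step I expect to require the most care is the expansion in Step 3: after extracting the common monomial one must verify that the leftover exponents of $\lo$ and $\lt$ on the ``unchosen'' vertices come out as $|A\cut Q|,|Q\setminus A|$ (respectively $|A\cut P|,|P\setminus A|$), so that the residual factor is exactly $c$ — this is where the three bookkeeping identities above are used, and everything else is routine. The only other point to watch is that the two hypotheses of the lemma really are the precise nonvanishing conditions ($D_s\neq0$ and $N_s\neq0$) needed to justify the cancellation and cross-multiplication in Step 2.
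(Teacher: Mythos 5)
Your proof is correct, and the plan is the same as the paper's: reduce $x(S)=x(T)$ to a product identity involving only $S\setminus T$ and $T\setminus S$, expand products of two-term differences over subsets, reindex to $A\subseteq S\sd T$, and recognize $\sum_A f(A)$ as the resulting difference. The execution, though, is tidier than the paper's. The paper works with $Y(S,T)=\prod_{s\in S}(C_s+B_s)\prod_{t\in T}C_t$ and performs a four-fold expansion (first choosing $S_0\subseteq S$ to split $B_s+C_s$, then $S_1\subseteq S_0$ and $S_2\subseteq S\setminus S_0$ to expand each $B_s$ and $C_s$ as a two-term difference, and $T_0\subseteq T$), followed by a nontrivial reordering of summation to collect $S_{12}=S_1\cup S_2$. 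You sidestep that bookkeeping by first combining $C_s+B_s$ into the single binomial $N_s=\lt^{s+1}(x+\lt)-\lo^{s+1}(x+\lo)$ (equal to $(\lt-\lo)(C_s+B_s)$), so the whole expansion needs only the two choices $B\subseteq S\setminus T$ and $C\subseteq T\setminus S$, merged into $A=B\cup C$; the residual factor then is $c$ directly, with no change of summation order and no $(\lt-\lo)^{-|S|-|T|}$ prefactor to carry around. Your reading of the two hypotheses as $D_s\neq0$ and $N_s\neq0$ is also exactly what is needed to justify cancelling the common factors and cross-multiplying, and it matches the role these conditions play in the paper (compatibility, i.e.\ $C_s\neq 0$, and $1+B_s/C_s\neq 0$).
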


\begin{proof}
  Let $\S=S\setminus T$ and $\T=T\setminus S$. We have $x(S)=x(T)$ iff
  $x(S)+1=x(T)+1$. Condition \eqref{eq:path_nonzero} ensures
  $1+\frac{B_s}{C_s}\neq 0$ for all $s\in S\uni T$. Thus, $x(S\cut
  T)+1\neq 0$, and $x(S)=x(T)$ iff $x(\S)+1=x(\T)+1$. This is
  equivalent to $Y(\S,\T)=Y(\T,\S)$, where $Y(S,T)=\prod_{s\in S}
  (C_s+B_s)\prod_{t\in T}C_t$. For sets of integers $M\subs N$, let us
  define
  \begin{align*}
    B(N,M)&=\lo^{\Ss M}(-\lo)^{|M|}\lt^{\Ss{N\setminus M}}\lt^{|N\setminus M|} \quad \text{and}
    \\
    C(N, M) &= \lo^{\Ss {M}}\lo^{2|M|}(-1)^{|M|}\lt^{\Ss {N\setminus
        M}} \lt^{2|N\setminus M|}.
  \end{align*}
Using this notation, it is
\[\begin{split}
Y(S,T)= &
\prod_{s\in S}(B_s+C_s)\prod_{t\in T}C_t \\
=&\sum_{S_0\subs S}\prod_{s\in S_0}B_s\prod_{s\in S\setminus S_0}C_s \prod_{t\in T}C_t\\
=&(\lt-\lo)^{-|S|-|T|}\sum_{S_0\subs S}x^{|S_0|}\sum_{S_1\subs S_0}B(S_0,S_1)
\sum_{S_2\subs S\setminus S_0}C(S\setminus S_0,S_2)\\
&\sum_{T_0\subs T}C(T,T_0).
\end{split}
\]
We want to collect the terms $\lo^{\Ss M}$ and $\lt^{\Ss {N\setminus
    M}}$ in one place. Thus, we change the order in which $S$ is split
into subsets $S_0, S_1, S_2$ (cf.\ Figure~\ref{fig:summation_sets})
such that we first choose $\Sot := S_1\uni S_2 \subs S$, then
$S_1\subs \Sot$ (which implies $S_2 = \Sot\setminus S_1$), and finally
$S_0$ as $S_1\subs S_0\subs S\setminus S_2$.
\begin{figure}
  \centering
  \begin{tikzpicture}
    \draw (0,0) rectangle (3,1.5);
    \draw (1.7,-0.4) -- (1.3,1.9) node[left=0.4cm] {$S_0$} node [right=0.2cm] {$S\setminus S_0$};
    \draw (0.6,0) arc (0:90:0.6cm) node [below=0.4cm,right=0.0cm] {$S_1$};
    \draw (3,0.8) arc (90:180:0.8cm) node [above=0.3cm,right=0.2cm] {$S_2$};
  \end{tikzpicture}
  \caption{Partition of $S$.}
  \label{fig:summation_sets}
\end{figure}
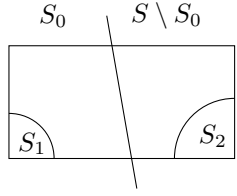
Now we can write
\begin{equation}
\label{eq:Yfinal}
\begin{split}
  Y(S,T)=(\lt-\lo)^{-|S|-|T|}\sum_{\Sot\subs S}\sum_{T_0 \subs
    T}&\lo^{\Ss \Sot + \Ss{T_0}}\lt^{\Ss {S\setminus
      \Sot}+\Ss{T\setminus
      T_0}}\\
  &(-\lo)^{|\Sot|+|T_0|} \lt^{|S\setminus\Sot|+|T\setminus T_0|}\\
  &c(S,T,\Sot,T_0),
\end{split}  
\end{equation}
where
\[
\begin{split}
  c(S,T,\Sot, T_0)=
  \lo^{|T_0|}\lt^{|T\setminus T_0|}
  \sum_{\substack{(S_1,S_2)\\S_1\dot \uni S_2=\Sot}}
  \sum_{\substack{S_0\\S_1\subs S_0\subs S\setminus S_2}}
  x^{|S_0|} \lo^{|S_2|}\lt^{|(S\setminus S_0)\setminus S_2|}.
\end{split}
\]
Note that \eqref{eq:Yfinal} as symmetric in $S$ and $T$, except for
the term $c(S,T,\Sot,T_0)$. Let us analyze this non-symmetrical
term. We write $S_0=S_1\dot \uni \Sz$.
\[
\begin{split}
  c(S,T,\Sot,T_0)= &\lo^{|T_0|}\lt^{|T\setminus T_0|}\sum_{S_1\dot \uni S_2=\Sot}\lo^{|S_2|}
  x^{|S_1|}
  \sum_{\Sz\subs S\setminus \Sot} x^{|\Sz|}\lt^{|(S\setminus \Sot)\setminus \Sz|} \\
  =&\lo^{|T_0|}\lt^{|T\setminus T_0|}\sum_{S_1\dot \uni S_2=\Sot}\lo^{|S_2|}
  x^{|S_1|} (x+\lt)^{|S\setminus \Sot|} \\
  =&\lo^{|T_0|}\lt^{|T\setminus T_0|}(x+\lo)^{|\Sot|}(x+\lt)^{|S\setminus \Sot|}.
\end{split}
\]
This implies the statement of the lemma.
\end{proof}

\begin{proof}[Proof of Theorem~\ref{thm:expreduction}]
  On input $G=(V,E)$ with $|V|=n$, do the following. Construct
  $G_{S_0}$, $G_{S_1}$, \ldots, $G_{S_n}$ with $S_i$ from
  Lemma~\ref{lem:niceSi}. Every $G_{S_i}$ can be constructed in time
  polynomial in $\gs {G_{S_i}}$, which is $\poly(n)$ by
  Remark~\ref{rem:S_clone_size} and by
  condition~\ref{lem:niceSi:sum}.\ of Lemma~\ref{lem:niceSi}. Thus,
  the whole construction can be performed in time $\poly(n)$.

  Again by condition~\ref{lem:niceSi:sum}.\ of Lemma~\ref{lem:niceSi},
  there is some $c'>1$ such that all $G_{S_i}$ have $\leq c'n\log^3 n$
  vertices. Evaluate $I(G_{S_0};x)$, $I(G_{S_1};x)$, \ldots,
  $I(G_{S_n};x)$. By the assumption of the theorem, one such
  evaluation can be performed in time
  \[ 2^{c\frac {c'n\log^3 n}{(\log(c'n\log^3 n))^3}}=2^{\frac{cc'n\log^3n}{(\log c' + \log n + 3\log\log n)^3}}
  \leq 2^{\frac{cc'n\log^3n}{(\log n)^3}} = 2^{cc'n}\] for every
  $c>0$.

  Using Theorem~\ref{thm:S_cloning}, we can compute $I(G;x(S_0))$,
  $I(G;x(S_1)$, \ldots, $I(G;x(S_n))$ from the already computed
  $I(G_{S_i};x)$ in time $\poly(n)$.

  By condition~\ref{lem:niceSi:separation}.\ of
  Lemma~\ref{lem:niceSi}, the $n+1$ values $x(S_i)$ are pairwise
  distinct. As $I(G;X)$ is a polynomial of degree at most $n$ in $X$,
  this enables us to interpolate $I(G;X)$. The overall time needed is
  $\poly(n)2^{cc'n}\leq 2^{(cc'+\eps)n}$ for every $\eps>0$.
\end{proof}

%%%%%%%%%%%%%%%%%%%%%%%%%%%%%%%%%%%%%%%%%%%%%%%%%%%%%%%%%%%%%%%%%%%%%%
\section{Open Problems}

The most important open problem is to find a reduction that does not
lose the factor $\Theta(\log^3n)$ in the exponent of the running time.

Another interesting direction for further research are restricted
classes of graphs, for example graphs of bounded maximum degree or
regular graphs.

The independent set polynomial is a special case of the two-variable
interlace polynomial \cite{arratia_two_var_interl}. It would be
interesting to have an exponential time hardness result for this
polynomial as well. In this context, the following question arises: Is
the upper bound $\exp(O(\sqrt n))$ \cite{DBLP:conf/isaac/SekineIT95}
for evaluation of the Tutte polynomial on \emph{planar} graphs sharp?

\subsection*{Acknowledgments}

I would like to thank Raghavendra Rao and the anonymous referees for
helpful comments.

%%%%%%%%%%%%%%%%%%%%%%%%%%%%%%%%%%%%%%%%%%%%%%%%%%%%%%%%%%%%%%%%%%%%%%

\bibliographystyle{amsplain}
\bibliography{literatur}

\end{document}